%
%
%
%
%
%
%

\documentclass[10pt]{article}



\usepackage{times}
\usepackage{setspace}
\usepackage{amsmath}
\usepackage{amssymb}
\usepackage{color}
\usepackage{mathdots}
\usepackage{tikz}
\usetikzlibrary{arrows}
\usetikzlibrary{shapes}
\usepackage{footnote}
\makesavenoteenv{tabular}
\makesavenoteenv{table}
\PassOptionsToPackage{boxed,section}{algorithm}
\usepackage{algorithm}
\usepackage{algorithmic}
\usepackage{hyperref}
\hypersetup{colorlinks=true}
\usepackage{color}


\newcommand{\wbsquare}{\tikz{\path[draw=black,fill=white] (0,0) rectangle (.08,.16);\path[draw=black,fill=black] (0.08,0) rectangle (.16,.16);}}
\newcommand{\wsquare}{\tikz{\path[draw=black,fill=white] (0,0) rectangle (.16,.16);}}
\newcommand{\bwsquare}{\tikz{\path[draw=black,fill=black] (0,0) rectangle (.08,.16);\path[draw=black,fill=white] (0.08,0) rectangle (.16,.16);}}
\newcommand{\bsquare}{\tikz{\path[draw=black,fill=black] (0,0) rectangle (.16,.16);}}


\newcommand{\lgwbsquare}{\tikz{\path[draw=black,fill=white] (0,0) rectangle (.1,.2);\path[draw=black,fill=black] (0.1,0) rectangle (.2,.2);}}
\newcommand{\lgwsquare}{\tikz{\path[draw=black,fill=white] (0,0) rectangle (.2,.2);}}
\newcommand{\lgbwsquare}{\tikz{\path[draw=black,fill=black] (0,0) rectangle (.1,.2);\path[draw=black,fill=white] (0.1,0) rectangle (.2,.2);}}
\newcommand{\lgbsquare}{\tikz{\path[draw=black,fill=black] (0,0) rectangle (.2,.2);}}

\providecommand{\lgray}[1]{\ \ensuremath{\overset{#1}{\wbsquare}}\ }
\providecommand{\rgray}[1]{\ \ensuremath{\overset{#1}{\bwsquare}}\ }
\providecommand{\white}[1]{\ \ensuremath{\overset{#1}{\wsquare}}\ }
\providecommand{\black}[1]{\ \ensuremath{\overset{#1}{\bsquare}}\ }

\newcommand{\res}[1]{\mbox{$\ \mid\!\!\!\frac{#1}{\
			\stackrel{\mbox{\scriptsize\it RES}}{\ }\ }\ $}}

\newcommand{\rores}[1]{\mbox{$\ \mid\!\!\!\frac{#1}{\
			\stackrel{\mbox{\scriptsize\it RO-Res}}{\ }\ }\ $}}

\newcommand{\varres}[1]{\mbox{$\ \mid\!\!\!\frac{#1}{\
			\stackrel{\mbox{\scriptsize\it Var-RO-Res}}{\ }\ }\ $}}
\newtheorem{definition}{\em Definition}[section]
\newenvironment{proof}{ {\bf Proof:}} {$\Box$}
\newtheorem{lemma}{\em Lemma}[section]
\newtheorem{theorem}{\em Theorem}[section]
\newtheorem{corollary}{\em Corollary}[section]
\newcounter{excounter}
\newenvironment{example}{\stepcounter{excounter}{\em Example 
(\arabic{excounter}):}
\begin{normalsize}}{\end{normalsize}}
\long\def\comment#1{}
\binoppenalty=\maxdimen
\relpenalty=\maxdimen
\voffset=-.5in
\setlength{\textheight}{8.75in}
\setlength{\textwidth}{6.8in}
\setlength{\topmargin}{0.25in}
\setlength{\topmargin}{0.25in}
\setlength{\headheight}{0.0in}
\setlength{\headsep}{0.0in}
\setlength{\oddsidemargin}{-.19in}
\setlength{\parindent}{1pc}

\title{On the computational complexity of read once resolution decidability in $2$CNF formulas}

\author{Hans Kleine B\"uning\\ Computer Science Institute, \\ University of Paderborn \\ kbcsl@uni-paderborn.de \and Piotr Wojciechowski\thanks{This research was supported in part by the National
Science Foundation through Award CCF-1305054.}\\ LDCSEE \\ West Virginia University \\ pwjociec@mix.wvu.edu \and K. Subramani\thanks{This work was supported by the Air Force Research Laboratory under US Air Force contract FA8750-16-3-6003. The views expressed are those of the authors and do not reflect the official policy or position of the Department of Defense or the U.S. Government.}\\ LDCSEE \\ West Virginia University \\ k.subramani@mail.wvu.edu}

\begin{document}





\maketitle

\begin{abstract}
   In this paper, we analyze $2$CNF formulas from the perspectives of Read-Once resolution  (ROR) refutation schemes. We focus on two types of ROR refutations, viz.,
   variable-once  refutation and clause-once refutation. In the former, each variable may be used at most once in the derivation of a refutation, while in the latter, each clause may
   be used at most once. We show that the problem of checking whether a given $2$CNF formula has an ROR refutation under both schemes is
    {\bf NP-complete}. This is surprising in light of the fact that there exist polynomial refutation  schemes (tree-resolution and DAG-resolution) for $2$CNF formulas.
    On the positive side, we show that $2$CNF formulas have copy-complexity $2$, which means that any unsatisfiable $2$CNF formula has a refutation in
    which any clause needs to be used at most twice.

\end{abstract}

\section{Introduction}

   Resolution is a refutation procedure that
   was introduced in \cite{Robinson1} to establish the unsatisfiability of clausal
   boolean formulas. Resolution is a sound and complete procedure, although it
   is not  efficient  in general \cite{Hak1}.
    Resolution is one among many proof systems (refutation systems) that have
      been discussed in the literature \cite{Urqu1}; indeed it is among the weaker proof
      systems \cite{Bus2} in that there exist propositional formulas for which short proofs exist
      (in powerful proof systems) but resolution proofs of unsatisfiability are
      exponentially long. Resolution remains an attractive option for studying the
      complexity of constraint classes on account of its simplicity and wide applicability;
      it is important to note that resolution is the backbone of a range of automated
      theorem provers \cite{Bp2}.

    Resolution refutation techniques often arise in proof complexity.  Research in proof complexity is primarily concerned with the
     establishment of non-trivial lower bounds on the proof lengths
     of propositional tautologies (alternatively refutation lengths of
     propositional contradictions). An essential aspect of
     establishing a lower bound is the proof system used to establish
     the bound. For instance, super-polynomial bounds for tautologies
     have been established for weak proof systems such as resolution
     \cite{Hak1}.
     Establishing that there exist short refutations for all
     contradictions in a given proof system causes the classes {\bf
     NP} and {\bf coNP} to coincide \cite{CookReck1}. 

     There are a number of different types of resolution refutation that have been discussed in 
      the literature \cite{har1}. The most important types of resolution refutation are {\em tree-like, dag-like and read-once}.  Each type of resolution is characterized by a restriction on input clause combination.
      One of the simplest types of resolution is Read-once Resolution (ROR). In an ROR refutation, each input clause
      and each derived clause  may be used at most once. There are several reasons to prefer a ROR proof over a generalized resolution
      proof, not the least of which is that ROR proofs must {\em necessarily} be of length polynomial (actually, linear)
      in the size of the input. It follows that ROR cannot be a complete proof system unless {\bf NP = coNP}.
      That does not preclude the possibility that we could check in polynomial time whether or not a given
      CNF formula has a ROR refutation. Iwama \cite{IM95} showed that even in case of
      $3$CNF formulas, the problem of checking ROR existence (henceforth, ROR decidability) is {\bf NP-complete}.

       It is well-known that $2$CNF satisfiability is decidable in polynomial time. There are several algorithms for
       $2$CNF satisfiability, most of which convert the clausal formula into a directed graph and then exploit
       the connection between the existence of labeled paths in the digraph and 
       the satisfiability of the input formula.\comment{ Not surprisingly resolution refutation proofs for $2$CNF formulas
       are also polynomial in the size of the input. It was recently established that both the tree-like and dag-like
       resolution refutations of $2$CNF formulas can be computed in polynomial time.} A natural progression of
       this research is to establish the ROR complexity of $2$CNF formulas. We show that the problem of
       deciding whether an arbitrary $2$CNF formula has a read-once refutation is {\bf NP-complete}. Although ROR is an {\bf incomplete}
       refutation technique, we may be able to find a refutation if clauses can be copied. We show that
       every $2$CNF formula has an ROR refutation, if every clause can be copied once.

\smallskip

    The principal contributions of this paper are as follows:
 \begin{enumerate}
 \item Establishing the computational complexity of ROR decidability for $2$CNF formulas.
 \item Establishing the clause copy complexity for ROR existence in $2$CNF formulas.
 \end{enumerate}
 
   The rest of this paper is organized as follows: In Section \ref{sop}, we discuss problem preliminaries and formally define the various types of refutations
   discussed in this paper. The minimal unsatisfiable subset problems is detailed in Section \ref{minunsat}. In Section \ref{vror}, the Variable
   Read Once Resolution (VAR-ROR) refutation problem is detailed. We also establish the computational complexity of this problem in $2$CNF.
   We show that ROR decidability for $2$CNF formulas is {\bf NP-complete} in Section \ref{for}. The copy complexity of $2$CNF formulas is 
   established in Section \ref{copy}. Finally, we conclude in Section \ref{conc}, by summarizing our contributions and outlining avenues for future research.

\section{Preliminaries}
\label{sop}
 In this section, we briefly discuss the terms used in this paper.
 We assume that the reader is familiar with elementary propositional logic.
 A literal is a variable $x$ or its complement $\neg x$. $x$ is termed a positive and $\neg x$ is termed a negative literal. A clause is a disjunction of literals. The empty clause, which is always false, is denoted as $\sqcup$.\\
 A Boolean formula $\Phi$ is in CNF, if the formula is a conjunction of clauses. Please note, that a CNF is a set of clauses and written as
 $\{\alpha_1, \ldots, \alpha_n\}$, $\alpha_1 \wedge \ldots \wedge \alpha_n$, or simply as $\Phi= \alpha_1, \ldots, \alpha_n$ for clauses $ \alpha_i$.
A formula in CNF is in $k$-CNF, if it is of the form
 $\alpha_{1} \wedge \alpha_{2} \wedge \ldots \wedge \alpha_{m}$, where each $\alpha_{i}$ is a clause of at most $k$ literals. \comment{A Horn formula is a formula in CNF for which every clause contains at most one positive literal.}

 For a single resolution step with parent clauses $(\alpha \vee x)$ and 
 $(\neg x \vee \beta)$ with resolvent $(\alpha \vee \beta)$, we write
\[(\alpha \vee x),(\neg x \vee \beta) \res{1} (\alpha \vee \beta).\]
The variable $x$ is termed matching or resolution variable.
 If for initial clauses $\alpha_1, \ldots, \alpha_n$, a clause $\pi$ can be
 generated by a sequence of resolution steps we write 
 \[\alpha_1, \ldots, \alpha_n \res{} \pi.\]

    We now formally define the types of resolution refutation discussed in this paper.
\comment{
\begin{definition}
A {\em Tree-Like} resolution refutation is a refutation in which each derived clause can be used at most once. 
\end{definition}

Note that in tree-like refutations, the input clauses can be used multiple times and thus any derived clause
can be derived multiple times. Tree-like refutation is a {\bf complete} refutation procedure \cite{Bp2}.


\begin{definition}
A {\em Dag-Like} resolution refutation is a refutation in which each clause can be used multiple times. 
\end{definition}

 It follows that Dag-like refutations procedures are {\bf complete} as well. Furthermore Dag-like refutations $p$-simulate
 tree-like refutations \cite{CookReck1}.
 }

\begin{definition}
A formula is in {\bf Var-ROR} (variable-read once resolution), if and only if there is a resolution refutation for which every variable is used at most once as a matching variable.
\end{definition}

A resolution derivation $\Phi \res{} \pi$ is a Var-ROR derivation, if the matching variables are used at most once. We denote this as $\Phi \varres{} \pi$.
\begin{definition}
A formula $\Phi$ is said to be {\bf minimally Var-ROR}, if and only if 
$\Phi \in $ Var-ROR and every proper sub-formula is not in Var-ROR.
\end{definition}

\begin{definition}
A {\bf Read-Once} resolution refutation is a refutation in which each clause, $\pi$, can be used in only one resolution step. This applies to clauses present in the original formula and those derived as a result of previous resolution steps.
\end{definition}

A resolution derivation $\Phi \res{} \pi$ is a read-once resolution derivation, if 
for all resolution steps $\pi_1 \wedge \pi_2 \res{1} \pi$, we delete the clauses $\pi_1$ and $\pi_2$ from, and add the resolvent $\pi$ to,	the current set of clauses. In other words, if $U$ is the current set of clauses, we obtain
$U = (U \setminus \{\pi_1, .\pi_2\}) \cup \{\pi\}$. An example of this refutation method can be found in Appendix \ref{refexamp}.

ROR is the set of formulas in CNF, for which a read-once resolution refutation exists $(\Phi \in$ ROR if and only if $\Phi \rores{} \sqcup)$.

\begin{example}
Consider the following $2$CNF formula:
\begin{eqnarray*}
(x_1,x_2) & (x_3,x_4) & (\neg x_1, \neg x_3)\\
(\neg x_1,\neg x_4) & (\neg x_2,\neg x_3) & (\neg x_2,\neg x_4)
\end{eqnarray*}
We now show that this formula does not have a read-once refutation.

To derive $(x_1)$ we need to derive $(\neg x_2)$. Similarly, to derive $(x_2)$ we need to derive $(x_1)$. However, the derivations of both $(\neg x_1)$
and $(\neg x_2)$ require the use of the clause $(x_3,x_4)$.

To derive $(x_3)$ we need to derive $(\neg x_4)$. Similarly, to derive $(x_4)$ we need to derive $(\neg x_3)$. However, the derivations of both $(\neg x_3)$ and $(\neg x_4)$ require the use of the clause $(x_1,x_2)$.
\end{example}

\begin{definition}
A formula, $\Phi$, is {\bf minimally ROR} if and only if the formula is in ROR and 
every proper sub-formula is not in ROR.
\end{definition}

It is important to note that both types of  Read-Once resolution (Var-ROR and ROR) are  {\bf incomplete} refutation procedures.
Furthermore, ROR is a strictly more powerful refutation procedure than VAR-ROR. For instance, consider the formula:
\[\alpha= \{(a \vee b), (a  \vee \neg b), (\neg a \vee b), (\neg a \vee \neg b)\}. \]
It is not  hard to see that $\alpha$ has a ROR refutation
but not Var-ROR refutation.

After a resolution step $(\pi_1 \vee x) \wedge (\neg x \vee \pi_2) \res{1} (\pi_1 \vee \pi_2)$ in a Var-ROR refutation no clause with a literal over $x$ can be used. Thus, we can delete all clauses which include the variable $x$. We see that Var-ROR $\subseteq$ ROR. For 2-CNF formulas, Var-ROR is a proper subset of ROR. 

\begin{example}
Let $\alpha= \{(a \vee b), (a  \vee \neg b), (\neg a \vee b), (\neg a \vee \neg b)\}$.
We have that $\alpha \in$ ROR, as shown by the following series of refutation steps:
\begin{enumerate}
\item $(a \vee b) \wedge (\neg a \vee b) \res{1} (b)$.
\item $(a \vee \neg b) \wedge (\neg a \vee \neg b) \res{1} \neg (b)$.
\item $(b) \wedge (\neg b) \res{1} \sqcup$.
\end{enumerate}
However, $\alpha$ is not in Var-ROR. After the first resolution step, $(a \vee b) \wedge (\neg a \vee b) \res{1} (b)$, the variable $a$ can not be used as the matching variable in any other resolution step. Thus, we cannot derive $(\neg b)$. Hence there is no Var-ROR refutation.
\end{example}
\comment{
\subsection{ROR}
Under read-once refutation once a clause has been used in a resolution it can be considered removed from the formula. Thus, any clause in the original formula can be used only once. Note that, if a clause can be generated using two disjoint sets of original clauses then that clause can be reused as long as both derivations of that clause are present in the refutation. An example of this refutation method can be found in Appendix \ref{appROR}.

\subsection{Tree-like Resolution}
Under tree-like refutation we can reuse the clauses in the original formula. However, we cannot reuse derived clauses without rederiving them. Each time a clause is rederived it contributes to the size of the refutation. This occurs even if the clause is rederived from the same clauses originally present in the formula. Additionally, each time one of the original clauses is reused it also contributes to the size of the refutation. An example of this resolution method can be found in Appendix \ref{appTLR}.

\subsection{Dag-like Resolution}
Under dag-like refutation we can reuse any clause (original or derived). Unlike for a tree-like refutation, each time a clause is reused it does not contribute to the size of the refutation. Since tree style refutations allow for clauses to be rederived, every formula that has a DAG-like refutation has a tree style refutation. However, the DAG-like refutation is usually smaller.  An example of this resolution method can be found in Appendix \ref{appDLR}.
}
\section{Minimal Unsatisfiability}
\label{minunsat}

In the characterization of various read-once classes and in the proofs, we make use of minimal unsatisfiable formulas and splitting of these formulas.

First, we recall some notions and results.
\begin{definition}
	A formula in CNF is {\bf minimal unsatisfiable}, if and only if the formula is unsatisfiable and every proper sub-formula is satisfiable.
	The set of minimal unsatisfiable formulas is denoted as MU.
\end{definition}

\begin{definition}
	The {\bf deficiency} of a formula $\Phi$, written as $d(\Phi)$, is the number of clauses minus the number of variables. For fixed $k$, MU($k$) 
	is the set of MU-formulas with deficiency $k$.
\end{definition}

The problem of deciding whether a formula is minimal unsatisfiable is ${\bf D^P}$-{\bf complete} \cite{PW88}. ${\bf D^P}$ is the class of problems which can be
represented as the difference of two {\bf NP}-problems. Every minimal unsatisfiable formula has a deficiency greater or equal than 1 \cite{AL86}.

For fixed $k$, deciding if $\Phi \in$ MU($k$) can be solved in polynomial time {\cite{FKS02}. For formulas in $2$CNF, there is no constant upper bound for the deficiency of minimal unsatisfiable 
formulas.\comment{, whereas minimal unsatisfiable Horn formulas always have deficiency 1.}

The proofs in this paper make use of so-called splitting formulas for MU-formulas.

\begin{definition}
	Let 
	\begin{eqnarray*}
	\Phi &= & (x \vee \pi_1), \ldots, (x \vee \pi_r), \sigma_1, \ldots, \sigma_t,\\
	& & (\neg x \vee \phi_1), \ldots, (\neg x \vee \phi_q)
	\end{eqnarray*}
	 be a minimal 
	unsatisfiable formula, where neither the literal $x$ nor the literal $\neg x$ occur in the clauses $\sigma_i$. 
	A pair of formulas $(F_x, F_{\neg x})$ with $F_x = \pi_1, \ldots, \pi_r, \sigma_{i_1}, \ldots, \sigma_{i_s}$ and 
	$F_{\neg x} = \sigma_{j_1}, \ldots, \sigma_{j_k}, \phi_1, \ldots, \phi_q$ is called a {\bf splitting} of $\Phi$ over $x$, if $F_x$ and $F_{\neg x}$ are minimal
	unsatisfiable.
\end{definition}
\begin{definition}
	A splitting $(F_x, F_{\neg x})$ is {\bf disjunctive}, if $\{\sigma_{i_1}, \ldots, \sigma_{i_s}\} \cap
	\{\sigma_{j_1}, \ldots, \sigma_{j_k}\}$ is empty. That is, $F_x$ and $F_{\neg x}$ have no clause $\sigma_i$ in common.
	If additionally $F_x$ and $F_{\neg x}$ do not share any variables, 
	then we say that the splitting is {\bf variable-disjunctive}.
\end{definition}

We can continue to split both $F_x$ and $F_{\neg x}$ to obtain a splitting tree. Splitting stops when the formula contains only one variable.
\begin{definition}
A splitting tree is {\bf complete} if every leaf of the tree is a formula that contains at most one variable.
\end{definition}

Note that, after splitting on the variable $x$, neither $F_x$ nor $F_{\neg x}$ contains the variable $x$. Thus, the splitting tree can have depth at most $(n-1)$.

  In case of disjunctive splittings, we speak about disjunctive splitting trees.
It is known that formulas in MU($1$) have variable-disjunctive splitting trees \cite{DDKB98}.
Moreover, every minimal unsatisfiable formula with a read-once resolution refutation has a disjunctive splitting tree and vice versa. \cite{KBZ02}.

Let $\Phi$ be a minimal unsatisfiable $2$CNF formula. We will now prove several properties of $\Phi$.

\begin{lemma}
\label{lem01}
 If $\, \Phi$ contains a unit clause, then $\Phi \in$ MU(1).
\end{lemma}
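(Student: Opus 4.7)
The plan is to prove this by induction on the number of variables $n$ of $\Phi$. For the base case $n=1$, the only non-tautological clauses over a single variable $x$ are $(x)$ and $(\neg x)$, so a minimally unsatisfiable $2$CNF on one variable must be $\Phi = \{(x), (\neg x)\}$, which has deficiency $1$.

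For the inductive step, let $n \geq 2$ and suppose $(x) \in \Phi$. I would first record two structural facts. First, by a standard property of MU formulas, every variable must appear in both polarities: if some $z$ appeared only positively, then removing all clauses containing $z$ would produce a proper unsatisfiable subformula (since any satisfying assignment could be extended by $z = 1$), contradicting minimal unsatisfiability; in particular $\neg x$ occurs in some clause of $\Phi$, which in $2$CNF has the form $(\neg x \vee \ell)$ for a literal $\ell$. Second, the unit $(x)$ subsumes any clause $(x \vee \ell')$, so such a clause could be removed without losing unsatisfiability, again contradicting minimality. Hence $(x)$ is the only clause of $\Phi$ containing the literal $x$. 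I would then consider $\Phi' := \Phi[x=1]$: delete $(x)$ and replace every clause $(\neg x \vee \ell_i)$ by $(\ell_i)$. Since no positive-$x$ clauses other than $(x)$ disappear, every variable $z \neq x$ still occurs in $\Phi'$, so $\Phi'$ has exactly $n-1$ variables and $|\Phi|-1$ clauses. A short lifting argument shows $\Phi'$ is itself minimally unsatisfiable: any unsatisfiable proper subset $\Psi' \subsetneq \Phi'$ pulls back (by reintroducing $\neg x$ where needed, then adjoining $(x)$) to an unsatisfiable proper subset of $\Phi$, contradicting the MU hypothesis on $\Phi$. Moreover, the clause $(\neg x \vee \ell)$ guaranteed above becomes the unit clause $(\ell) \in \Phi'$, so the inductive hypothesis gives $d(\Phi') = 1$, and since $d(\Phi') = (|\Phi|-1) - (n-1) = d(\Phi)$, we conclude $\Phi \in$ MU(1).

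The one thing to be careful about is that the $x$-elimination could in principle collapse two distinct clauses of $\Phi$ into one and spoil the count $|\Phi'| = |\Phi|-1$. This could only happen if some unit $(\ell) \in \Phi$ coincides with the image of $(\neg x \vee \ell) \in \Phi$, but then $(\ell)$ subsumes $(\neg x \vee \ell)$ and the same subsumption argument used above shows $\Phi$ would not be minimally unsatisfiable. So the deficiency is genuinely preserved under the reduction step, and the induction closes. I expect this subsumption/collision bookkeeping to be the only subtle part; the rest is routine once the right reduction $\Phi \mapsto \Phi[x=1]$ is in hand.
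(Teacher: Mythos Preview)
Your proof is correct and follows essentially the same approach as the paper: induction on the number of variables, reducing via $\Phi \mapsto \Phi[x=1]$ when $(x)$ is the unit clause, and observing that the resulting formula is again MU with a unit clause and the same deficiency. The paper's version is terser and simply asserts the form $\Phi = (x) \wedge (\neg x \vee L_1) \wedge \cdots \wedge (\neg x \vee L_k) \wedge \sigma$ and that $\Phi_x$ is MU with one fewer variable and clause, whereas you spell out the subsumption, lifting, and collision arguments that justify those assertions.
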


\begin{proof}
	(By induction on the number $n$ of variables) If $\Phi$ has only one variable, then $\Phi = (x) \wedge (\neg x)$. Clearly, $\Phi \in$ MU($1$).
	
	Now assume that $\Phi$ has $(n+1)$ variables. Let $(x) \in \Phi$ be a unit clause. Thus, $\Phi$ has the form 
	\[\Phi = (x) \wedge (\neg x \vee L_1) \wedge \ldots \wedge (\neg x \vee L_k) \wedge \sigma,\]
	where $x$ and $\neg x$ do not occur in any clause in $\sigma$. Thus, the formula
	\[\Phi_x = (L_1) \wedge \ldots \wedge (L_k) \wedge \sigma\]
	 is minimal unsatisfiable. We have that $(L_1), \ldots, (L_k)$ are unit clauses. Thus, by the induction
	hypothesis $\Phi_x \in$ MU($1$). This means that $\Phi \in$ MU($1$), since $\Phi$ contains one more clause and one more variable than $\Phi_x$.
\end{proof}

\begin{lemma}
\label{lem02}
	For every variable $x$, the splitting formulas $F_x, F_{\neg x}$ are in MU($1$).
\end{lemma}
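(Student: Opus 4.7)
The plan is to reduce directly to Lemma \ref{lem01} by exploiting the $2$CNF structure of $\Phi$. The key observation is that in the splitting notation, each clause $(x \vee \pi_i)$ of $\Phi$ has at most two literals, so each stripped clause $\pi_i$ in $F_x$ is either the empty clause $\sqcup$ or a unit clause. Consequently, as soon as we know $r \geq 1$ (i.e.\ the literal $x$ actually appears in $\Phi$), $F_x$ automatically contains a clause of size at most one, and Lemma \ref{lem01} does essentially all the work.

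First I would verify that $r \geq 1$ and $q \geq 1$. Since $\Phi$ is minimal unsatisfiable and $x$ is a variable of $\Phi$, both polarities of $x$ must occur in $\Phi$: otherwise one could delete all clauses containing the unique polarity of $x$, obtaining an unsatisfiable proper subformula and contradicting minimality. Hence the families $\{(x \vee \pi_i)\}_{i=1}^{r}$ and $\{(\neg x \vee \phi_j)\}_{j=1}^{q}$ are both nonempty.

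Next I would carry out a two-case analysis for $F_x$. If some $\pi_i$ equals $\sqcup$, i.e.\ $(x)$ occurs as a unit clause of $\Phi$, then $\sqcup \in F_x$. By definition of a splitting, $F_x$ is minimal unsatisfiable; but $\{\sqcup\}$ is already unsatisfiable, so no other clause can survive in $F_x$. Thus $F_x = \{\sqcup\}$, which has one clause and no variables, giving $d(F_x) = 1$ and $F_x \in$ MU($1$). Otherwise each $\pi_i$ is a genuine literal, so $F_x$ contains the unit clause $\pi_1$, and Lemma \ref{lem01} yields $F_x \in$ MU($1$) at once. The argument for $F_{\neg x}$ is entirely symmetric, using the clauses $(\neg x \vee \phi_j)$ in place of $(x \vee \pi_i)$.

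There is no real obstacle here; the statement is essentially a corollary of Lemma \ref{lem01}, with the $2$CNF hypothesis providing precisely what is needed to guarantee that each splitting formula contains a clause of size at most one. The only point requiring a small comment is the degenerate subcase where a unit clause on $x$ already sits in $\Phi$, which collapses $F_x$ to $\{\sqcup\}$ and must be handled separately from the standard appeal to Lemma \ref{lem01}.
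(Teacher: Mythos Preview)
Your proposal is correct and follows essentially the same approach as the paper: both arguments observe that, because $\Phi$ is in $2$CNF, stripping $x$ (resp.\ $\neg x$) from the clauses of $F_x$ (resp.\ $F_{\neg x}$) leaves unit clauses, and then invoke Lemma~\ref{lem01}. Your treatment is slightly more careful in that you separately handle the degenerate case where $(x)$ itself is a unit clause of $\Phi$ (so that $\sqcup$ rather than a unit clause appears in $F_x$), a point the paper's one-line proof glosses over.
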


\begin{proof}
	We have that the splitting formulas $F_x$ and $F_{\neg x}$ contain a unit clause and are minimal unsatisfiable. Thus, from lemma \ref{lem01}, $F_x, F_{\neg x}$ are in MU($1$).
\end{proof}

\begin{lemma} \cite{KBZ03}
\label{lem03}
	If for a variable $x$, there is a disjunctive splitting, then the splitting is unique.
\end{lemma}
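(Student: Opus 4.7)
The plan is to recast a disjunctive splitting as a partition of the $\sigma$-clauses, and then show by a cross-combination argument that any two such partitions must coincide. First I would observe that in any splitting $(F_x, F_{\neg x})$, the literals $\pi_1, \ldots, \pi_r$ are forced into $F_x$ and $\phi_1, \ldots, \phi_q$ into $F_{\neg x}$, so the only freedom lies in distributing the clauses $\sigma_1, \ldots, \sigma_t$. In a disjunctive splitting each $\sigma_i$ lies on exactly one side; and since $\Phi \in \mathrm{MU}$, every $\sigma_i$ must lie on at least one side, for otherwise $\Phi \setminus \{\sigma_i\}$ would still be unsatisfiable ($F_x$ witnesses UNSAT under $x=0$ and $F_{\neg x}$ under $x=1$). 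Thus a disjunctive splitting amounts to a partition of $\{\sigma_1, \ldots, \sigma_t\}$ into disjoint sets $S_x$ and $S_{\neg x}$ such that $\{\pi_1, \ldots, \pi_r\} \cup S_x$ and $S_{\neg x} \cup \{\phi_1, \ldots, \phi_q\}$ are both minimally unsatisfiable.

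Suppose now, for contradiction, that two disjunctive splittings yield distinct partitions $(S_x, S_{\neg x})$ and $(S_x', S_{\neg x}')$. After relabeling I can pick $\sigma \in S_x \setminus S_x'$; then automatically $\sigma \in S_{\neg x}'$, so $\sigma \in S_x \cap S_{\neg x}'$. The key step is to ``cross'' the two splittings and define
\[ \Phi' \;=\; \{(x \vee \pi_i) : 1 \le i \le r\} \;\cup\; (S_x' \cup S_{\neg x}) \;\cup\; \{(\neg x \vee \phi_j) : 1 \le j \le q\}. \]
Since $\sigma \notin S_x' \cup S_{\neg x}$, the clause $\sigma$ is missing from $\Phi'$, so $\Phi' \subsetneq \Phi$. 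To see that $\Phi'$ is unsatisfiable, I would check both restrictions: $\Phi'|_{x=0}$ contains $F_x' = \{\pi_1, \ldots, \pi_r\} \cup S_x'$, and $\Phi'|_{x=1}$ contains $F_{\neg x} = S_{\neg x} \cup \{\phi_1, \ldots, \phi_q\}$. Both of these are unsatisfiable by assumption, so both restrictions are UNSAT, which forces $\Phi'$ itself to be UNSAT. This contradicts $\Phi \in \mathrm{MU}$, and so the two partitions must in fact agree.

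I do not foresee a serious obstacle. The one step that warrants a moment's care is the observation that every $\sigma_i$ must appear on some side of any splitting, which uses only the MU-hypothesis on $\Phi$; once that is in hand, the rest reduces to a clean combinatorial manipulation, and the cross-construction directly exploits the unsatisfiability of each side of the two purported splittings to produce a proper UNSAT sub-formula of $\Phi$.
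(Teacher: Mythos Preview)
Your argument is correct. The paper does not supply its own proof of this lemma; it simply quotes the result from \cite{KBZ03}, so there is no in-paper argument to compare against. Your cross-combination proof is the natural one: reducing a disjunctive splitting to a partition of the $\sigma$-clauses (using $\Phi\in\mathrm{MU}$ to force coverage), and then mixing the $x$-side of one splitting with the $\neg x$-side of the other to manufacture a proper unsatisfiable subformula of $\Phi$. Note that nothing in your argument uses the $2$CNF restriction, so you in fact recover the general statement from \cite{KBZ03}, which is exactly what the paper is invoking.
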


\begin{lemma}
\label{lem04}
 The problem of determining if $\Phi$ has a complete disjunctive splitting is in {\bf P}. Furthermore, if $\Phi$ has a disjunctive splitting for some variable $x$,
 then there exists a complete splitting tree.
\end{lemma}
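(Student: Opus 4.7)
The plan is to leverage three previously stated ingredients: (i) the uniqueness of a disjunctive splitting on a fixed variable (Lemma \ref{lem03}); (ii) the fact that both halves of any disjunctive splitting of a minimal unsatisfiable $2$CNF formula lie in MU($1$) (Lemma \ref{lem02}); and (iii) the cited facts that every MU($1$) formula has a variable-disjunctive splitting tree \cite{DDKB98} and that MU($k$) membership (in particular $k=1$) is decidable in polynomial time \cite{FKS02}.

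For the polynomial-time algorithm I would iterate over each variable $x$ occurring in $\Phi$ and try to construct the disjunctive splitting on $x$. Write $\Phi = A_x \cup B_x \cup C_x$, where $A_x$ is the set of clauses containing the literal $x$, $B_x$ is the set of clauses containing $\neg x$, and $C_x$ contains the remaining clauses. Let $A'_x$ and $B'_x$ denote $A_x$ and $B_x$ after removing $x$ and $\neg x$ respectively. A disjunctive splitting over $x$ is a partition $C_x = C^1 \cup C^2$ (disjoint) such that $F_x = A'_x \cup C^1$ and $F_{\neg x} = B'_x \cup C^2$ are both minimal unsatisfiable. Since by Lemma \ref{lem03} such a splitting is unique when it exists, every clause $\sigma \in C_x$ is forced into exactly one side, and the forced side can be read off by considering which variables of $\sigma$ are already connected (through shared variables in the clause-variable graph) to $A'_x$ or to $B'_x$. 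After constructing the candidate $F_x$ and $F_{\neg x}$, I would verify in polynomial time that each lies in MU($1$) using the algorithm of \cite{FKS02}. If the check succeeds for some $x$, the algorithm returns YES; otherwise it returns NO. Since there are only $O(n)$ variables to try, the overall running time is polynomial.

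The ``furthermore'' claim then follows structurally. If $\Phi$ admits a disjunctive splitting on some $x$, then by Lemma \ref{lem02} both $F_x$ and $F_{\neg x}$ are in MU($1$), and by \cite{DDKB98} each of these has a complete variable-disjunctive (hence disjunctive) splitting tree. Attaching these two subtrees as the children of a root node labelled by $x$ yields a complete disjunctive splitting tree for $\Phi$, so that any single successful variable $x$ suffices. The main obstacle I expect is in justifying that the forced assignment of clauses in $C_x$ to the two sides can genuinely be computed in polynomial time rather than by enumerating subsets: this requires exploiting the structural characterization of MU($1$) formulas via perfect matchings in the clause-variable bipartite graph, so that connectivity of $\sigma \in C_x$ to $A'_x$ or $B'_x$ in this graph dictates its placement. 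Once this is established, the remaining verification is just an MU($1$) test on each half.
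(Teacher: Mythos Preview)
Your treatment of the ``furthermore'' clause is correct and matches the paper exactly: once a disjunctive splitting over some $x$ exists, Lemma~\ref{lem02} forces $F_x,F_{\neg x}\in$ MU($1$), and then \cite{DDKB98} provides complete (variable-)disjunctive splitting trees for each half, which glue together under the root $x$.

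For the polynomial-time algorithm, however, your route is more complicated than necessary and contains the gap you yourself flag. You propose to \emph{deduce} the side of each $\sigma\in C_x$ via connectivity in a clause--variable graph and then certify the result with an MU($1$) test. But you have not shown that connectivity to $A'_x$ versus $B'_x$ actually determines the side: a clause $\sigma$ may well share variables with clauses on both sides even when a disjunctive splitting exists (disjunctive means clause-disjoint, not variable-disjoint), so the forcing rule as stated can be ambiguous or wrong. Making this precise would require a nontrivial structural argument about MU($1$) that you have not supplied.

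The paper sidesteps this entirely. Instead of reconstructing the partition, it simply \emph{computes} a splitting directly: form $A'_x\cup C_x$, and prune it to any minimal unsatisfiable subformula $F_x$ (repeatedly drop a clause and test satisfiability; $2$-SAT is linear time, so this is polynomial). Do the same on the $\neg x$ side to get $F_{\neg x}$. This yields \emph{some} splitting $(F_x,F_{\neg x})$. Now invoke Lemma~\ref{lem03} in its strong form: if a disjunctive splitting over $x$ exists at all, it is the \emph{only} splitting over $x$, so the one just computed must be it; conversely, if the computed pair overlaps on some $\sigma$, no disjunctive splitting over $x$ can exist. Iterate over all variables $x$. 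No connectivity reasoning, no MU($1$) oracle from \cite{FKS02} is needed (Lemma~\ref{lem02} already guarantees the halves land in MU($1$) whenever they are MU). This is both simpler and closes the gap in your argument.
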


\begin{proof}
	Let $\Phi = (x \vee L_1) \wedge (x \vee L_k) \wedge \sigma \wedge (\neg x \vee K_1) \wedge (\neg x \vee K_r)$. For $x$, we  take the formula
	$(L_1) \wedge (L_k) \wedge \sigma$ and reduce the formula to a minimal unsatisfiable formula $F_x= (L_1) \wedge (L_k) \wedge \sigma_1$ where 
	$\sigma_1 \subseteq \sigma$. This can be done in polynomial time, because the satisfiability problem for 2-CNF is solvable in linear time. Similarly, for $\neg x$, we can compute $F_{\neg x}$.
	 
	To decide whether the formula has a disjunctive splitting tree, it suffices to look for a variable $x$ for which $(F_x,F_{\neg x})$ is a disjunctive splitting.
	From lemma \ref{lem02}, $F_x$ and $F_{\neg x}$ are in MU(1) and therefore have disjunctive splitting trees \cite{DDKB98}.
	
	For each $x$, we compute a splitting $(F_x,F_{\neg x})$. If this splitting is disjunctive, we have found the desired splitting. If it is not disjunctive, then for $x$ there is no disjunctive splitting. If $x$ did have a disjunctive splitting, then, by lemma \ref{lem03}, that splitting would be unique meaning that $x$ would have no non-disjunctive splittings. 
	
	Hence, the problem is solvable in polynomial time.
\end{proof}

\section{The Complexity of Var-ROR for 2CNF}
\label{vror}

In this section, we show that determining if a formula in $2$CNF has a variable read-once refutation is {\bf NP-complete}. The Var-ROR problem is {\bf NP-complete} for CNF formulas in general \cite{Sze01}. We show that this result holds even when restricted to $2$CNF formulas.

Let $\Phi$ be a formula in $2$CNF.

\begin{theorem}
	\label{Theorem 4.1}
	$\Phi \in$ Var-ROR, if and only if there exists a sub-formula $\Phi' \subseteq \Phi$ such that $\Phi' \in$ MU($1$).
	
\end{theorem}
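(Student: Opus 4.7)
I prove both directions separately, the forward being the substantive one.

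For the reverse direction ($\Leftarrow$), if $\Phi'\subseteq\Phi$ is in MU$(1)$, then by the characterization of Davydov et al.~\cite{DDKB98}, $\Phi'$ admits a variable-disjunctive splitting tree. Such a tree translates directly into a resolution refutation in which each splitting variable is the matching variable of exactly one step---a Var-ROR refutation of $\Phi'$, and hence of $\Phi$ (the additional clauses of $\Phi\setminus\Phi'$ being harmless if left unused).

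For the forward direction ($\Rightarrow$), fix a Var-ROR refutation $R$ of $\Phi$ that is minimal, in the sense that every derived clause contributes to the final $\sqcup$, and let $\Phi'\subseteq\Phi$ be the set of clauses of $\Phi$ used as leaves of $R$. The first observation is that in the 2CNF setting each clause of $\Phi'$ is used \emph{exactly once} in $R$: a clause $C$ with variables $\{y,z\}$ could in principle be used twice, once matching $y$ and once matching $z$, but after matching $y$ no clause containing $y$ may appear in any later step, so the second use of $C$ would produce a resolvent still carrying the already-matched variable $y$---a resolvent that can never be further eliminated, contradicting the minimality of $R$. Consequently $R$ is a binary tree whose $|X|$ internal nodes are labelled by the distinct matching variables $X$ and whose $|X|+1$ leaves are precisely the clauses of $\Phi'$. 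Every variable in $\Phi'$ must be eliminated to reach $\sqcup$ and every matching variable originates in some leaf clause, hence $\mathrm{vars}(\Phi')=X$ and $d(\Phi')=1$.

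It remains to show $\Phi'$ is minimally unsatisfiable, which together with the deficiency yields $\Phi'\in$ MU$(1)$. I argue by induction on the depth of $R$: the root matches some variable $x$; the two subtrees derive $(x)$ and $(\neg x)$ from disjoint sub-formulas $\Phi'_+,\Phi'_-$ that share no matching variables; augmenting each subtree with the complementary unit clause yields a full Var-ROR refutation of smaller depth. By induction, $\Phi'_+\cup\{(\neg x)\}$ and $\Phi'_-\cup\{(x)\}$ both lie in MU$(1)$. The hard step is then the splicing: to forbid a proper unsatisfiable $\Phi''\subsetneq\Phi'$, decompose $\Phi''=\Phi''_+\cup\Phi''_-$ with $\Phi''_\pm=\Phi''\cap\Phi'_\pm$. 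Because any proper sub-formula of the MU$(1)$ formula $\Phi'_\pm\cup\{(\mp x)\}$ is satisfiable, one extracts satisfying assignments for $\Phi''_+$ and $\Phi''_-$ that can be chosen to agree on $x$ (the only variable they share, since the matching variables of the two subtrees are disjoint); combining them yields a satisfying assignment of $\Phi''$, a contradiction.
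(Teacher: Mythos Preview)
Your reverse direction matches the paper's: both invoke the variable-disjunctive splitting trees guaranteed by \cite{DDKB98} to read off a Var-ROR refutation from any MU$(1)$ sub-formula.

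In the forward direction, your counting argument for $d(\Phi')=1$ is correct and makes explicit something the paper leaves implicit. The gap is in the inductive proof of minimal unsatisfiability. You assert that adjoining the unit $(\neg x)$ to the $(x)$-subtree gives a Var-ROR refutation ``of smaller depth'', but the new refutation has depth $d_{+}+1$ where $d_{+}$ is the depth of the $(x)$-subtree; since the original depth is $\max(d_{+},d_{-})+1$, the augmented refutation is strictly shallower only when $d_{+}<d_{-}$. Hence on the deeper side the hypothesis is unavailable, and your splicing step---which needs \emph{both} $\Phi'_{+}\cup\{(\neg x)\}$ and $\Phi'_{-}\cup\{(x)\}$ to lie in MU$(1)$ in order to manufacture satisfying assignments agreeing on $x$---cannot be completed. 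Changing the induction parameter to number of clauses or variables does not rescue this either: in the degenerate case $\Phi'_{-}=\{(\neg x)\}$ the augmented formula $\Phi'_{+}\cup\{(\neg x)\}$ is literally $\Phi'$ itself, so no parameter decreases.

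The paper avoids this by passing through the sharper Lemma~\ref{lemma4.1} (minimally Var-ROR $\Leftrightarrow$ MU$(1)$) and inducting on the number of variables while peeling off the \emph{first} resolution step rather than the last. After resolving on $x$, the variable $x$ disappears entirely from the residual formula (any remaining $x$-clause would be dead in a minimally Var-ROR formula), so the variable count strictly drops and the induction goes through uniformly with no case distinction. Your approach can be salvaged---for instance, when one side is a bare unit clause, unit-propagate it and apply the hypothesis to the strictly smaller reduct---but as written the induction does not close.
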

Theorem \ref{Theorem 4.1} follows immediately from lemma \ref{lemma4.1}.

\begin{lemma}
	\label{lemma4.1}
	$\Phi$ is minimally Var-ROR, if and only if $\Phi \in $ MU($1$).	
\end{lemma}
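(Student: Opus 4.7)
The plan is to prove the two implications separately. For the direction $\Phi \in$ MU(1) $\Rightarrow$ $\Phi$ is minimally Var-ROR, we invoke the result of \cite{DDKB98} that every MU(1) formula admits a complete variable-disjunctive splitting tree. The tree is translated directly into a Var-ROR refutation by recursion: at an internal node that splits on $x$, the two sub-refutations produce $(x)$ from the $F_x$ side (with $x$ reintroduced to the original clauses) and $(\neg x)$ from the $F_{\neg x}$ side, and a single resolution step on $x$ yields the parent clause. Variable-disjointness at every node guarantees that no matching variable is ever reused, so the refutation is in Var-ROR. Minimality is immediate from $\Phi \in$ MU: every proper subformula is satisfiable and hence admits no refutation at all.

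For the converse, let $R$ be a Var-ROR refutation of the minimally Var-ROR formula $\Phi$. Every clause of $\Phi$ must appear in $R$, for otherwise deleting an unused clause would leave $R$ as a Var-ROR refutation of a proper subformula, contradicting minimality. Since the root of $R$ is $\sqcup$, every variable appearing in any clause of $\Phi$ must be matched in $R$ in order to cancel its literals, so $R$ has exactly $n$ internal nodes and $n+1$ leaves. A single matching on $y$ removes at most one positive and one negative occurrence of $y$ from among the leaves, so each literal appears in at most one leaf; since a given clause contributes the same literals every time it is used, each clause of $\Phi$ occupies exactly one leaf of $R$. This forces $|\Phi| = n+1$ and $d(\Phi) = 1$.

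It remains to upgrade $d(\Phi) = 1$ to $\Phi \in$ MU(1), which we do by induction on $n$. The root step of $R$ resolves $(x)$ against $(\neg x)$, coming from sub-refutations $R_L, R_R$ whose disjoint matching-variable sets $V_L, V_R$ both avoid $x$. Let $\Phi_L, \Phi_R$ denote the clauses of $\Phi$ used on each side. In 2CNF, $\Phi_L \cap \Phi_R = \emptyset$ (a shared two-literal clause would require both of its variables to lie in $V_L \cap V_R$), $\Phi_L$ contains no $\neg x$-literal, $\Phi_R$ no $x$-literal, and stripping $x$ yields formulas $F_x, F_{\neg x}$ over the variable-disjoint sets $V_L, V_R$ with Var-ROR refutations derived from $R_L, R_R$. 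The main obstacle is showing that $F_x$ and $F_{\neg x}$ are themselves minimally Var-ROR: any Var-ROR refutation of $F_x \setminus \{C\}$ can be lifted back by re-adding the $x$-literal, and then either composed with $R_R$ to refute a proper subformula of $\Phi$ or, if no $x$-literal survives, yields such a refutation directly, contradicting minimality of $\Phi$. Applying the induction hypothesis gives $F_x, F_{\neg x} \in$ MU(1), and a direct assignment-based check (setting $x$ to either value and using the satisfiability of any proper subformula of the MU(1) side whose unit complements the chosen value) shows that $\Phi$ is minimally unsatisfiable; combined with $d(\Phi) = 1$, this yields $\Phi \in$ MU(1).
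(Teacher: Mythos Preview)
Your backward direction (MU(1) $\Rightarrow$ minimally Var-ROR) is correct and essentially identical to the paper's.

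In the forward direction, your counting argument has a genuine error. The claim ``each literal appears in at most one leaf'' is false even for minimally Var-ROR $2$CNF formulas. Take
\[
\Phi \;=\; \{(y \vee a),\ (\neg a \vee b),\ (y \vee c),\ (\neg c \vee \neg b),\ (\neg y)\}.
\]
This formula is in MU(1) and hence minimally Var-ROR; a witnessing refutation resolves on $a$, then $c$, then $b$, then $y$, and the literal $y$ occurs in the two distinct leaves $(y \vee a)$ and $(y \vee c)$. The flaw in your accounting is that a resolution step on a variable $z \neq y$ can merge two copies of $y$ into one (e.g.\ $(y \vee b),(y \vee \neg b) \res{1} (y)$), so the single matching on $y$ is not the only place where $y$-occurrences disappear. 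Consequently you have not established $d(\Phi)=1$ at that stage.

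Fortunately the gap does not sink the proof. Your inductive argument on the last resolution step is sound: the disjointness $\Phi_L \cap \Phi_R = \emptyset$, the variable-disjointness $V_L \cap V_R = \emptyset$, and the lifting argument showing $F_x,F_{\neg x}$ are minimally Var-ROR all go through. By the inductive hypothesis $F_x,F_{\neg x}\in$ MU(1), and your assignment-based check gives $\Phi\in$ MU. To recover $d(\Phi)=1$ without the flawed counting, note that stripping $x$ from $\Phi_L$ is injective (if both $(x \vee a)$ and $(a)$ lay in $\Phi_L$ one could replace the weaker leaf by the stronger throughout $R_L$ and drop $(x \vee a)$ from $\Phi$, contradicting minimality), so $|\Phi|=|\Phi_L|+|\Phi_R|=|F_x|+|F_{\neg x}|=(|V_L|+1)+(|V_R|+1)=n+1$.

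For comparison, the paper's forward direction inducts on the \emph{first} resolution step rather than the last. The key observation there is that once a variable $x$ has been matched, no other clause of a minimally Var-ROR $\Phi$ can contain $x$ (such a clause would be unusable in the rest of the refutation and hence deletable), so replacing the two parent clauses by their resolvent yields a minimally Var-ROR formula on one fewer variable and one fewer clause, and MU(1) follows directly. This sidesteps the splitting into two halves and the separate MU verification; your last-step approach is valid but carries more overhead.
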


\begin{proof} 
Let $\Phi$ be minimally Var-ROR. We will show that $\Phi \in $ MU($1$) by induction on the number of variables.

	First, assume that $\Phi$ is a CNF formula over $1$ variable. Thus, $\Phi$ has the form $(x) \wedge (\neg x)$. Obviously, $\Phi \in$ MU($1$).
	
	Now assume that $\Phi$ is a CNF formula over $(n+1)$ variables. Let 
	\[(\alpha \vee x), (\neg x \vee \beta) \res{1} (\alpha \vee \beta)\]
	 be the first resolution step in a Var-ROR refutation of $\Phi$. Thus, the formula 
	\[\Phi'= (\Phi \setminus \{(\alpha \vee x), (\neg x \vee \beta)\} \cup \{(\alpha \vee \beta)\})\]
	is minimally Var-ROR
	and contains no clause with $x$ or $\neg x$. $x$ has already been used as a matching variable. Thus, if any other clauses of $\Phi$, or $\Phi'$, used $x$ then $\Phi$ would not be minimally Var-ROR.
	
	$\Phi'$ has $n$ variables. Thus, by the induction hypothesis $\Phi' \in $ MU(1). Since $\Phi'$ is minimal unsatisfiable and consists of $(n+1)$ clauses, $\Phi$ is minimal unsatisfiable and consists of $(n+2)$ clauses. This means that $\Phi \in $ MU(1).
		
Now let $\Phi$ be a formula in MU($1$). We will show that $\Phi$ is minimally Var-ROR by induction on the number of variables.

	For every formula in MU(1), there exists a variable-disjunctive splitting tree \cite{DDKB98}. Thus, we can easily construct
	a Var-ROR refutation for $\Phi$.
	
	First, assume that $\Phi$ is a CNF formula over $1$ variable. Thus,  $\Phi = (x) \wedge (\neg x)$ and is minimally Var-ROR.
	
	Now assume that $\Phi$ is a CNF formula over $(n+1)$ variables. Let $(F_x, F_{\neg x})$ be the first variable-disjunctive splitting in a variable-disjunctive splitting tree.
	Without loss of generality we assume that neither $F_x$ nor $F_{\neg x}$ is the empty clause. By the induction hypothesis, both formulas  are minimally Var-ROR, because, by lemma \ref{lem02}, $F_x,F_{\neg x} \in$ MU($1$).
	
	Thus, there is a Var-ROR derivation $F_x^x \varres{} (x)$ and $F_{\neg x}^{\neg x} \varres{} (\neg x)$, where $F_x^x$ ($F_{\neg x}^{\neg x}$) is the formula 
	we obtain by adding the removed literal $x$ ($\neg x$) to the clauses in $F_x$ ($F_{\neg x}$). The final step is to resolve $(x)$ and $(\neg x)$. Note that the variable $x$
	has not been used as a matching variable in the derivations $F_x^x \varres{} (x)$ and $F_{\neg x}^{\neg x} \varres{} (\neg x)$. Hence, $\Phi$ is in Var-ROR and is minimally Var-ROR.
\end{proof}

For a formula in MU(1), there always exists a complete variable-disjunctive splitting. Furthermore, a complete variable-disjunctive splitting tree can be computed in polynomial time. Thus, every formula in MU(1) has a Var-ROR refutation that can be computed in polynomial time.\comment{ Since every minimal unsatisfiable Horn formula is in MU(1), every unsatisfiable Horn formula has a Var-ROR refutation, which can be computed in polynomial time. }

\begin{corollary}
Every formula in MU($1$) has a Var-ROR refutation that can be computed in polynomial time.
\end{corollary}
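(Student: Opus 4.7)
The plan is to extract an explicit polynomial-time algorithm from the constructive direction of Lemma \ref{lemma4.1}. Given $\Phi \in$ MU($1$), I would first build a complete variable-disjunctive splitting tree for $\Phi$. Existence of such a tree is guaranteed by the cited DDKB98 result, and, as noted in the paragraph preceding the corollary, the tree can be produced in polynomial time: at each node we pick a variable $x$, form the candidate pair $(F_x, F_{\neg x})$, and reduce each side to a minimal unsatisfiable sub-formula in polynomial time. By Lemma \ref{lem03}, whenever a disjunctive splitting on $x$ exists it is unique, so we can test each variable in turn until we find one that yields a variable-disjunctive splitting. The tree has depth at most $n-1$ and at most $2n$ nodes overall, so the total construction time is polynomial in the size of $\Phi$.

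Next I would traverse this tree bottom-up exactly as in the proof of Lemma \ref{lemma4.1} in order to produce the refutation. At each leaf, the formula has a single variable $y$ and equals $(y) \wedge (\neg y)$; I resolve these two clauses. At an internal node with splitting $(F_x, F_{\neg x})$, I recursively obtain Var-ROR derivations $F_x^x \varres{} (x)$ and $F_{\neg x}^{\neg x} \varres{} (\neg x)$, then apply one final resolution on $x$ to get $\sqcup$. Because the splitting is variable-disjunctive, the sets of matching variables used in the two subderivations are disjoint, and $x$ itself is used only in the concluding step, so the resulting derivation is indeed Var-ROR.

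The correctness side is essentially already contained in Lemma \ref{lemma4.1}; what needs to be argued here is the polynomial running time. I would verify this by a straightforward induction on tree depth: the refutation at a node is obtained from the two subrefutations by appending a single resolution step, so its size is linear in the number of tree nodes, and the work at each node is polynomial. The main obstacle, and the only nontrivial step, is the polynomial-time construction of the splitting tree itself, but this is guaranteed by Lemma \ref{lem04} (polynomial-time detection of complete disjunctive splittings), after which the Var-ROR proof can be emitted by a simple recursive traversal. Combining the two bounds yields the claimed polynomial-time refutation.
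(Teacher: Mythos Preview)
Your proposal is correct and follows essentially the same approach as the paper: the paper's own justification is just the paragraph immediately preceding the corollary, which notes that MU($1$) formulas admit a polynomial-time computable complete variable-disjunctive splitting tree (via \cite{DDKB98}), from which the Var-ROR refutation is read off exactly as in the constructive half of Lemma~\ref{lemma4.1}. One small remark: your appeal to Lemma~\ref{lem04} is superfluous, since that lemma is phrased for $2$CNF MU formulas and concerns \emph{deciding} whether a disjunctive splitting exists, whereas for MU($1$) the existence and polynomial-time computability of the variable-disjunctive splitting tree already come directly from the DDKB98 result you cite earlier.
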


\comment{\begin{corollary}
Every unsatisfiable Horn formula has a Var-ROR refutation that can be constructed in polynomial time. 	
\end{corollary}}

Next, we will show that determining if a $2$CNF formula has a Var-ROR refutation is {\bf NP-complete}. 
It can easily be seen that this problem is in {\bf NP}. If the formula has $n$ variables, then at most $n$ resolution steps can be performed. 
The {\bf NP-hardness} will be shown by a reduction from the vertex-disjoint path problem for directed graphs.

\begin{definition} 
Given a directed graph $G$ and pairwise distinct vertexes $s_1, t_1, s_2, t_2$, the {\bf vertex-disjoint path problem} (2-DPP)
consists of finding a pair of vertex-disjoint paths in $G$, one from $s_1$ to $t_1$ and the other from $s_2$ to $t_2$.
\end{definition}

The problem is known to be {\bf NP-complete} \cite {FHW80}. Now we modify the problem as follows.

\begin{definition}
Given a directed graph $G$ and two distinct vertexes $s$ and $t$, the {\bf vertex-disjoint cycle problem} (C-DPP) 
consists of finding a pair of vertex-disjoint paths in $G$, one from $s$ to $t$ and the other from $t$ to $s$.
\end{definition}

Note that the paths are vertex-disjoint, if the inner vertexes of the path from $s$ to $t$ are disjoint from 
the inner vertexes of the path from $t$ to $s$.

\begin{lemma} 
	C-DPP is {\bf NP-complete}.
\end{lemma}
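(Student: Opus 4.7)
The plan is to establish membership in \textbf{NP} by the standard certificate argument (a pair of paths can be verified to be vertex-disjoint in polynomial time), so the real work is showing \textbf{NP-hardness} via a reduction from the 2-DPP problem, which is already known to be \textbf{NP-complete}.

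For the reduction, given a 2-DPP instance $(G, s_1, t_1, s_2, t_2)$, I would construct a graph $G'$ by adding two fresh vertices $s$ and $t$ (not appearing in $G$) and exactly four new directed edges: $s \to s_1$, $t_1 \to t$, $t \to s_2$, and $t_2 \to s$. All edges of $G$ are retained. The claim to prove is that $G'$ has a pair of vertex-disjoint paths from $s$ to $t$ and from $t$ to $s$ (disjoint on inner vertices) if and only if $G$ has a pair of vertex-disjoint paths from $s_1$ to $t_1$ and from $s_2$ to $t_2$.

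The forward direction is straightforward: given $P_1 : s_1 \leadsto t_1$ and $P_2 : s_2 \leadsto t_2$ in $G$ that are vertex-disjoint, prepend/append the new edges to form $P = (s, s_1) \cdot P_1 \cdot (t_1, t)$ and $Q = (t, s_2) \cdot P_2 \cdot (t_2, s)$; the inner vertices of $P$ and $Q$ are precisely the vertices of $P_1$ and $P_2$, and these are disjoint by hypothesis (using that $s_1, t_1, s_2, t_2$ are pairwise distinct). The reverse direction is the more delicate step and will be the main obstacle: I must argue that any simple $s$-$t$ path $P$ in $G'$ is forced to begin with $s \to s_1$ and end with $t_1 \to t$, and similarly that $Q$ is forced to begin with $t \to s_2$ and end with $t_2 \to s$. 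This follows from the fact that $s$ and $t$ are fresh, so $s$ has a unique out-neighbor ($s_1$) and a unique in-neighbor ($t_2$), and likewise for $t$. Then simplicity of $P$ rules out its interior from traversing the edges $t \to s_2$ or $t_2 \to s$ (either would require revisiting $s$ or $t$), so the interior of $P$ is a path $P_1 : s_1 \leadsto t_1$ lying entirely in $G$, and symmetrically the interior of $Q$ yields $P_2 : s_2 \leadsto t_2$ in $G$. The inner-vertex disjointness of $P$ and $Q$ in $G'$ then translates directly into vertex-disjointness of $P_1$ and $P_2$ in $G$.

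Since the construction of $G'$ is clearly polynomial time and the equivalence holds in both directions, this yields a polynomial-time reduction from 2-DPP to C-DPP. Combined with membership in \textbf{NP}, this establishes that C-DPP is \textbf{NP-complete}.
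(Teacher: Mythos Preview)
Your proposal is correct and follows essentially the same approach as the paper: the same reduction from 2-DPP, the same gadget (add fresh vertices $s,t$ and the four edges $s\to s_1$, $t_1\to t$, $t\to s_2$, $t_2\to s$), and the same two-direction equivalence argument. You are in fact more careful than the paper in the reverse direction, explicitly using the uniqueness of the in- and out-neighbors of $s$ and $t$ to force the structure of the paths; the paper merely asserts ``by construction, $w_1$ must contain a path from $s_1$ to $t_1$'' without spelling this out.
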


\begin{proof}
	Obviously, the problem is in {\bf NP}. We will show {\bf NP-hardness} by a reduction from 2-DPP.
	
	From $G=(V,E)$, $s_1$, $t_1$, $s_2$, and $t_2$ we construct the new graph
	\[G'=(V \cup\{s,t\}, E \cup\{(s,s_1), (t_2, s), (t_1, t), (t, s_2)\}).\]
	
	Assume that $G$ has two vertex-disjoint paths, $w_1$ from $s_1$ to $t_1$, and $w_2$ from $s_2$ to $t_2$.
	Thus, the paths $(s, s_1), w_1, (t_1,t)$ and $(t,s_2), w_2, (t_2, s)$ in $G'$ are vertex-disjoint. Note that $s_1, s_2, t_1, t_2$ are pairwise distinct. 
	Thus, $G'$ has the desired vertex-disjoint cycle.
	
	Now assume that $G'$ has two vertex-disjoint paths, $w_1$ from $s$ to $t$, and $w_2$ from $t$ to $s$.
	By construction, $w_1$ must contain a path from $s_1$ to $t_1$. Similarly, $w_2$ must contain a path from $s_2$ to $t_2$. Since $w_1$ and $w_2$ are vertex-disjoint these new paths must also be vertex-disjoint.
	Thus, $G$ has the desired vertex-disjoint paths.
	\end{proof}

\begin{theorem}
\label{varnp}
	Determining if a $2$CNF formula has a Var-ROR refutation is {\bf NP-complete}.
\end{theorem}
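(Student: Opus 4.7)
Membership in $\mathbf{NP}$ is routine: any Var-ROR refutation of a formula on $n$ variables uses at most $n$ matching variables, hence at most $n$ resolution steps, and such a refutation can be verified in polynomial time. The substantive part is the $\mathbf{NP}$-hardness reduction, and my plan is to reduce from the vertex-disjoint cycle problem C-DPP, which was just shown to be $\mathbf{NP}$-complete.

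Given an instance $(G=(V,E), s, t)$ of C-DPP, I would construct a 2CNF formula $\Phi$ as follows. Introduce a single pivot variable $z$ together with one variable $y_v$ for each $v \in V \setminus \{s,t\}$, and define the literal assignment $\ell(s) := z$, $\ell(t) := \neg z$, and $\ell(v) := y_v$ for all other $v$. For each directed edge $(u,v) \in E$, include the implication clause $(\neg \ell(u) \vee \ell(v))$. The only clauses with fewer than two distinct literals are $(\neg z)$ and $(z)$, which come from the edges $(s,t)$ and $(t,s)$ when present; every other edge produces a genuine 2-literal clause, so $\Phi$ is in 2CNF, and the reduction is polynomial.

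For the easy direction of correctness, a pair of vertex-disjoint paths $P_1 : s \to v_1 \to \cdots \to v_k \to t$ and $P_2 : t \to u_1 \to \cdots \to u_m \to s$ in $G$ yields the sub-formula $\Phi' \subseteq \Phi$ consisting of the $k+m+2$ clauses along $P_1$ and $P_2$, involving exactly $k+m+1$ variables. Resolving along $P_1$ on the matching variables $y_{v_1}, \ldots, y_{v_k}$ produces the unit $(\neg z)$, and resolving along $P_2$ on $y_{u_1}, \ldots, y_{u_m}$ produces $(z)$; a final resolution on $z$ yields $\sqcup$. Vertex-disjointness ensures that no matching variable is ever reused, so this is a Var-ROR derivation. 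The deficiency of $\Phi'$ is $1$, and a standard case check shows that removing any single clause breaks exactly one of the two chains, leaving a formula that is satisfied by choosing $z$ consistently with the unbroken chain; hence $\Phi' \in$ MU($1$), and Theorem \ref{Theorem 4.1} gives $\Phi \in$ Var-ROR.

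The converse is the main obstacle. Suppose $\Phi \in$ Var-ROR; by Theorem \ref{Theorem 4.1} there is some $\Phi' \subseteq \Phi$ in MU($1$), from which I must extract two vertex-disjoint paths. My plan is induction on the number of variables of $\Phi'$, exploiting the variable-disjunctive splitting tree of MU($1$) formulas. In the base case the only single-variable MU($1$) sub-formula of $\Phi$ is $\{(z),(\neg z)\}$, recording precisely the direct edges $(t,s)$ and $(s,t)$. In the inductive step I would split $\Phi'$ on $z$: the clauses containing the literal $z$ (respectively $\neg z$) in $\Phi'$ come from edges of $G$ incident to $s$ or $t$, so after contracting these literals the two sides $F_z$ and $F_{\neg z}$ are MU($1$) formulas that encode the endpoints and interiors of the two halves of the cycle, and the inductive hypothesis turns each into a directed chain of edges, which splice into the required $t \to s$ and $s \to t$ paths; the variable-disjointness of the splitting translates directly into vertex-disjointness of the paths in $G$. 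The delicate step I expect to be the hardest is showing that no MU($1$) sub-formula of $\Phi$ can arise from a ``parallel'' or ``branching'' structure in $G$---that is, that minimal unsatisfiability together with the rigid implication-chain shape of the encoding forces every MU($1$) sub-formula to be exactly the clause set of two vertex-disjoint $s$--$t$ paths, and in particular rules out sub-formulas whose deficiency would jump above $1$ because two ``paths'' share an internal vertex.
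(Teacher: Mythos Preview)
Your construction of $\Phi$ and the forward direction coincide with the paper's (the paper writes $x_0$ for your pivot $z$ and produces the same implication clauses). The detour through Theorem~\ref{Theorem 4.1} in the forward direction is harmless but unnecessary, since the chain resolution you exhibit is already a Var-ROR refutation of $\Phi$.

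The backward direction is where the gap lies. Your stated plan is induction on the number of variables of $\Phi'$, with the inductive step splitting $\Phi'$ on $z$ and then letting ``the inductive hypothesis turn each [of $F_z$, $F_{\neg z}$] into a directed chain of edges.'' But your hypothesis concerns MU($1$) sub-formulas of $\Phi$---formulas that still contain the pivot $z$ and consist of genuine clauses of $\Phi$. After a single split on $z$, the pieces $F_z$ and $F_{\neg z}$ contain no occurrence of $z$ and are not sub-formulas of $\Phi$ at all (literals have been deleted), so there is nothing for the induction to recurse on. What you actually need is a one-shot structural argument that each of $F_z$ and $F_{\neg z}$ is a simple chain (the two-unit shape of Lemma~\ref{1mu}(1)) rather than the branching shape of Lemma~\ref{1mu}(2), together with an argument that the split on $z$ is \emph{variable}-disjunctive so that the two chains yield vertex-disjoint paths. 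That is precisely the ``delicate step'' you flag, and it is not supplied.

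The paper sidesteps this structural analysis with a different induction. It takes $\Phi'\subseteq\Phi$ minimally Var-ROR and inducts on the number of \emph{clauses} in $\Phi'$. In the inductive step it selects a resolution step $(L\vee x_i),(\neg x_i\vee K)\res{1}(L\vee K)$ from the given Var-ROR refutation and replaces the two parent clauses by the resolvent. Because $x_i$ occurs as a matching variable exactly once in a Var-ROR derivation, no other clause of $\Phi'$ mentions $x_i$; the new formula is one clause shorter, is still in Var-ROR, and corresponds to the graph obtained from $G$ by replacing the two edges through $v_i$ by a single shortcut edge. The inductive hypothesis yields a vertex-disjoint cycle in this contracted graph, which lifts back to $G$. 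This edge-by-edge contraction never needs to analyse the global shape of any MU($1$) piece.
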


\begin{proof}
	As previously stated, we only need to show {\bf NP-hardness}. That will be done by a reduction from C-DPP.
	
	From $G=(V, E)$, $s$, and $t$ we construct a formula $\Phi$ in $2$CNF as follows:
	\begin{enumerate}
	\item For each vertex $v_i \in V - \{s,t\}$, create the variable $x_i$.
	\item Create the variable $x_0$.
	\item Let $v_i,v_j \in V - \{s,t\}$.
	\begin{enumerate}
	\item If $(s,v_i) \in E$ add the clause $(x_0 \rightarrow x_i)$ to $\Phi$.
	\item If $(t,v_i) \in E$ add the clause $(\neg x_0 \rightarrow x_i)$ to $\Phi$.
	\item If $(v_i,s) \in E$ add the clause $(x_i \rightarrow x_0)$ to $\Phi$.
	\item If $(v_i,t) \in E$ add the clause $(x_i \rightarrow \neg x_0)$ to $\Phi$.
	\item If $(v_i,v_j) \in E$ add the clause $(x_i \rightarrow x_j)$ to $\Phi$.
	\end{enumerate}
	\end{enumerate}
	
	 Assume that $G$ has two vertex-disjoint paths, 
	 \[w_1 = s, v_{i_1}, \ldots, v_{i_j}, t \text{ and } w_2 = t, v_{i_{j+1}}, \ldots, v_{i_k}, s.\]
	  Thus, there exist $2$CNF formulas $\Phi_1$ and $\Phi_2$ such that:
	 \begin{eqnarray*}
	 \Phi_1 & = & \{(x_0 \rightarrow x_{i_1}),(x_{i_1} \rightarrow x_{i_2}), \ldots, (x_{i_j} \rightarrow \neg x_0)\} \\ 
	 \Phi_2 & = & \{(\neg x_0 \rightarrow x_{i_{j+1}}),(x_{i_{j+1}} \rightarrow x_{i_{j+2}}), \ldots, (x_{i_k} \rightarrow x_0)\}.
	 \end{eqnarray*} 
	 
	 Clearly, $\Phi_1 \varres{} (\neg x_0)$ and $\Phi_2 \varres{} (x_0)$.
	Note that $x_0$ has not been used as a matching variable. Since $w_1$ and $w_2$ are vertex-disjoint, we have that 
	\[\{x_{i_1}, \ldots, x_{i_j}\} \cap \{x_{i_{j+1}}, \ldots, x_{i_k}\} = \emptyset.\]
	Thus, $\Phi_1 \cup \Phi_2 \varres{} \sqcup$. This means that $\Phi \supseteq \Phi_1 \cup \Phi_2$ is in Var-ROR.
	
	Now assume that $\Phi$ is in Var-ROR.
	Let $\Phi' \subseteq \Phi$ be minimally Var-ROR. We have that $\Phi'$ contains clauses with $x_0$ and $\neg x_0$. Otherwise,
	the formula would be satisfiable by setting each $x_i$ to {\bf true}.
	
	We proceed by an induction on the number of clauses in $\Phi'$.
	
	The shortest formula is $\Phi' = (x_0 \rightarrow \neg x_0) \wedge (\neg x_0 \rightarrow x_0)$. This $\Phi'$ is generated when $(s,t),(t,s) \in E$. These edges form the desired vertex-disjoint paths.
	
	Let $y$ be the variable for which $(y) \wedge (\neg y) \res{1} \sqcup$ is the last resolution step in $\Phi' \varres{} \sqcup$.
	Thus, $\Phi'$, can be divided into two variable-disjoint sets of clauses, $\Phi'_1$ and $\Phi'_2$, such that $\Phi'_1 \varres{} (y)$ and $\Phi'_2 \varres{} (\neg y)$. Otherwise, a variable would be used twice in $\Phi' \varres{} \sqcup$.
	
	Let $(L \vee x_i) \wedge (\neg x_i \vee K) \res{1} (L \vee K)$ a resolution step in
	$\Phi'_1 \varres{} (y)$ such that $(L \vee x_i) \in \Phi'_1$ and $(\neg x_i \vee K) \in \Phi'_1$. Thus, no clause with $x_i$ occurs in $\Phi'_2$ or $\Phi_1$ (except $(L \vee x_i)$ and $(\neg x_i \vee K)$).
	 Moreover, we see that the formula 
	 \[(\Phi' \setminus \{(L \vee a_i), (\neg a_i \vee K)\}) \cup \{(L \vee K)\}\]
	  is in Var-ROR.
	This formula represents the reduced graph where the edges $L \rightarrow a_i$ and $a_i \rightarrow K$ are replaced with the edge $L \rightarrow K$. By the induction hypothesis, there exists a vertex-disjoint cycle in this reduced graph. Thus, a vertex-disjoint cycle exists in $G$.
	\end{proof}

For arbitrary formulas in CNF, the problem of deciding whether a formula $\Phi$ has a sub-formula $\Phi'$ such that 
$\Phi' \in MU(1)$ is known to be  {\bf NP-complete}. But it was only known for arbitrary CNF.
Based on the Theorems above,  we obtain as a corollary that the MU(1) sub-formula problem is {\bf NP-complete} for $2$CNF, too.

\begin{corollary}
	The problem of deciding whether a formula in $2$CNF contains a minimal unsatisfiable formula with deficiency 1 is {\bf NP-complete}.
\end{corollary}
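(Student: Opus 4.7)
The plan is to derive the corollary essentially for free by combining Theorem \ref{Theorem 4.1} with Theorem \ref{varnp}. Theorem \ref{Theorem 4.1} gives a logical equivalence: a $2$CNF formula $\Phi$ is in Var-ROR if and only if $\Phi$ contains a sub-formula $\Phi' \in$ MU($1$). Therefore the decision problem ``does $\Phi$ contain an MU($1$) sub-formula?'' is \emph{the same} decision problem as ``is $\Phi \in$ Var-ROR?''\ on the class of $2$CNF formulas, and we can transfer the complexity.

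For membership in \textbf{NP}, I would nondeterministically guess a candidate sub-formula $\Phi' \subseteq \Phi$ and verify $\Phi' \in$ MU($1$) in polynomial time using the algorithm of \cite{FKS02} (which decides MU($k$) in polynomial time for any fixed $k$, in particular $k=1$). The guess has size polynomial in $|\Phi|$, so the whole procedure is in \textbf{NP}.

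For \textbf{NP}-hardness, I would reuse the reduction from C-DPP built in the proof of Theorem \ref{varnp}. Given a C-DPP instance $(G,s,t)$, let $\Phi$ be the $2$CNF formula produced by that reduction. By Theorem \ref{varnp}, $\Phi \in$ Var-ROR iff $(G,s,t)$ has the required pair of vertex-disjoint paths. By Theorem \ref{Theorem 4.1}, $\Phi \in$ Var-ROR iff $\Phi$ contains an MU($1$) sub-formula. Chaining the two equivalences, $(G,s,t)$ is a YES-instance of C-DPP iff $\Phi$ contains an MU($1$) sub-formula, which gives a polynomial-time reduction from the \textbf{NP}-complete problem C-DPP.

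The only potentially subtle point is the membership argument: one has to confirm that verifying $\Phi' \in$ MU($1$) is genuinely polynomial on the guessed sub-formula (rather than needing to verify additionally that $\Phi'$ is \emph{minimal} among sub-formulas of $\Phi$); but MU($1$) itself already bundles in minimal unsatisfiability, so the \cite{FKS02} algorithm suffices and there is no additional obstacle. Everything else is an immediate chaining of the earlier results.
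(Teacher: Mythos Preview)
Your proposal is correct and matches the paper's approach: the corollary is stated immediately after Theorems \ref{Theorem 4.1} and \ref{varnp} precisely because the equivalence in Theorem \ref{Theorem 4.1} identifies the MU($1$)-sub-formula problem with the Var-ROR problem on $2$CNF, and Theorem \ref{varnp} supplies the \textbf{NP}-completeness. Your explicit \textbf{NP}-membership argument via guessing a sub-formula and invoking \cite{FKS02} is a clean way to spell this out (the paper leaves it implicit, relying on the equivalence and the easy \textbf{NP} upper bound for Var-ROR).
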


\section{The Complexity of ROR for 2CNF}
\label{for}
In this section, we show that determining if a formula in $2$CNF has a read-once refutation is {\bf NP-complete}. It was established in \cite{Sze01} that the Var-ROR problem for $2$CNF can be reduced to the ROR problem for $2$CNF. Together with Theorem \ref{varnp}, this establishes that the ROR problem for $2$CNF is {\bf NP-complete}. We now present an alternate way of obtaining this result.

Unlike minimally Var-ROR formulas, minimally ROR formulas are not necessarily minimal unsatisfiable. They also can have deficiencies other than 1. An example of such a formula can be seen in Appendix \ref{noMU}.

We now prove some properties of minimal unsatisfiable formulas in 2-CNF with one or two unit clauses.
It can easily be seen that such formulas contain at most two unit clauses.
\begin{lemma} \label{1mu}
	Let $\Phi$ be a minimal unsatisfiable $2$CNF formula.
	\begin{enumerate}
		\item If $\Phi$ contains two unit clauses, then $\Phi$ has the form
		\[(L), (\neg L \vee L_1), \ldots,  (\neg L_{t-1} \vee L_t), (\neg L_t \vee \neg K), (K)\] where $L, L_1, \ldots, L_t, K$ are pairwise distinct.
		\item If $\Phi$ contains exactly one unit clause, then $\Phi$ has the form
		\begin{eqnarray*}
		(L), (\neg L \vee L_1), (\neg L_1 \vee L_2), \ldots, (\neg L_t \vee K), \\
		(\neg K \vee S_1), (\neg S_1 \vee S_2), \ldots, (\neg S_q \vee R), \\
		(\neg K \vee P_1), (\neg P_1 \vee P_2), \ldots, (\neg P_m \vee \neg R)
		\end{eqnarray*}
		where the literals are all pairwise distinct.
		\item If $\Phi$ contains at least one unit clause, then $\Phi$ has a read-once resolution refutation.
		\item If $\Phi$ is in MU($1$), then $\Phi$ has a ROR refutation.
	\end{enumerate}	
	
\end{lemma}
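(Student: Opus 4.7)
Parts (3) and (4) are mostly corollaries of what has already been done. For (4), I would invoke Lemma~\ref{lemma4.1}: every $\Phi \in$~MU($1$) is minimally Var-ROR, and any Var-ROR refutation is \emph{a fortiori} an ROR refutation. Part (3) then follows from Lemma~\ref{lem01}, since a minimally unsatisfiable $2$CNF containing any unit clause lies in MU($1$), so (4) applies. Alternatively, once (1) and (2) are established, (3) is immediate by exhibiting refutations: in the form of (1) we unit-resolve $(L)$ along the chain to derive $(\neg K)$ and close with $(K)$; in the form of (2) we first collapse the two post-$K$ chains into $(\neg K \vee R)$ and $(\neg K \vee \neg R)$, resolve these on $R$ to obtain $(\neg K)$, then reduce the pre-$K$ chain together with $(L)$ down to $(K)$, and close $(K)\wedge(\neg K)\res{1}\sqcup$. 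By inspection each input clause is used exactly once.

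The real content is (1) and (2); I would prove both by induction on the number of variables, using minimality to control the shape of unit propagation. For (1), with unit clauses $(L)$ and $(K)$, the base case is $(L)\wedge(\neg L)$. The inductive step rests on two minimality observations. First, unit propagation from $L$ cannot branch: if two distinct clauses $(\neg L_i \vee A)$ and $(\neg L_i \vee B)$ were both present, then one of the two independent sub-derivations of $\sqcup$ through these clauses could be removed, contradicting minimality. Second, no literal can repeat on the chain, since the segment between the two occurrences could be short-circuited. These two facts force the chain to be linear, and minimality of the role of $(K)$ forces it to terminate at $\neg K$. Concretely, the induction step replaces the initial pair $(L),(\neg L \vee L_1)$ by its unit resolvent $(L_1)$, observes that the residual formula is minimally unsatisfiable on one fewer variable with two unit clauses, and applies the inductive hypothesis.

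For (2) the same strategy works: the implication tree rooted at $(L)$ must eventually derive complementary literals, and by the same minimality argument it is linear up to a first unavoidable branch point, which we name $K$; beyond $K$ it consists of exactly two linear chains terminating at $R$ and $\neg R$ respectively. Any additional edge, cross-link between the two post-$K$ branches, or shared internal literal would expose a clause of $\Phi$ that could be dropped while preserving unsatisfiability. The main obstacle throughout is formalising this minimality-based elimination of spurious structure in the $2$CNF implication graph — essentially the statement that any clause not participating in the explicit unit-resolution derivation of $\sqcup$ of the claimed chain or ``Y'' shape could be discarded, which in a MU formula is impossible. Once this is in hand, the stated normal forms follow, and parts (3) and (4) drop out as above.
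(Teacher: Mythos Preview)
Your proposal is correct, and in several places more detailed than the paper's own argument.

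For parts (1) and (2), the paper simply declares them ``straightforward because no minimal unsatisfiable $2$CNF formula contains more than two unit clauses'' and gives no further argument; your inductive sketch with the minimality-based elimination of branching and repetition is exactly the kind of justification the paper omits. For part (3), the paper's only route is the one you list as your \emph{alternative}: it invokes the explicit shapes from (1) and (2) and exhibits the same read-once refutations you describe (collapse the two post-$K$ chains to $(\neg K \vee R)$ and $(\neg K \vee \neg R)$, resolve on $R$, then feed $(\neg K)$ back through the initial chain). Your primary route---Lemma~\ref{lem01} puts $\Phi$ in MU($1$), then apply (4)---is not what the paper does, but it is shorter and perfectly valid. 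For part (4), the paper appeals to the external result that every MU($1$) formula has a complete disjunctive splitting tree, hence an ROR refutation via \cite{KBZ02}; your route through Lemma~\ref{lemma4.1} (MU($1$) $\Rightarrow$ minimally Var-ROR $\Rightarrow$ ROR) stays internal to the paper and is equally sound, since Lemma~\ref{lemma4.1} does not rely on the present lemma. Either way works; your choices trade an external citation for dependence on Section~\ref{vror}.
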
 	
\begin{proof}
We prove each part of the lemma separately.

The proofs of part 1 and 2 are straightforward because no minimal unsatisfiable $2$CNF formula contains more than two unit clauses.
\begin{enumerate}
  \setcounter{enumi}{2}
\item If the formula has two unit clauses, then structure of the formula leads immediately to a ROR refutation.

 If the formula has one unit clause, then we can perform the desired resolution refutation as follows:
 \begin{enumerate}
	\item First, we resolve  $ (\neg K \vee S_1), (\neg S_1 \vee S_2), \ldots, (\neg S_q \vee R)$ to obtain $(\neg K \vee R)$.
	\item Then, we resolve
	$(\neg K \vee P_1), (\neg P_1 \vee P_2), \ldots, (\neg P_m \vee \neg R)$
	to obtain $(\neg K \vee \neg R)$.
	\item Next, we perform the resolution step
	\[(\neg K \vee R) \wedge (\neg K \vee \neg R) \res{1} (\neg K).\]
	 \item Finally, the unit clause $(\neg K)$ together with the chain
	\[(L), (\neg L \vee L_1), (\neg L_2 \vee L_3), \ldots, (\neg L_t \vee K)\]
	 resolve to finish the ROR refutation.
	\end{enumerate}
	\item Every $2$CNF formula in MU($1$) has a complete
	disjunctive splitting tree. This guarantees the existence of a ROR refutation \cite{KBZ02}. 
\end{enumerate}
\end{proof}

\begin{theorem}
\label{rorsplit}
	Let $\Phi$ be in $2$CNF.
	$\Phi$ is in ROR, if and only if there exists a sub-formula $\Phi' \subseteq \Phi$ for which there exists a variable $x$ and a 
	disjunctive splitting $(F_x, F_{\neg x})$ over $x$, such that $F_x, F_{\neg x}$ are in MU(1).
\end{theorem}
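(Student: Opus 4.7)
The plan is to prove the two directions of the equivalence separately.

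For the $(\Leftarrow)$ direction, I would assume $\Phi' \subseteq \Phi$ admits a disjunctive splitting $(F_x, F_{\neg x})$ over some variable $x$ with $F_x, F_{\neg x} \in$ MU($1$). By Lemma \ref{1mu}(4), each of $F_x$ and $F_{\neg x}$ has a ROR refutation. I would lift each of these to a ROR derivation of the unit clause $(x)$ (respectively $(\neg x)$) from the corresponding clauses of $\Phi'$ by reintroducing the stripped literal; throughout the lifted derivations, the literal $x$ is merely carried along and is never used as a matching variable. Disjunctivity of the splitting ensures that these two derivations use disjoint subsets of clauses of $\Phi'$, so their combination is read-once. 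A concluding resolution $(x) \wedge (\neg x) \res{1} \sqcup$ then yields a ROR refutation of $\Phi' \subseteq \Phi$.

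For the $(\Rightarrow)$ direction, I would fix a ROR refutation of $\Phi$. Since $\sqcup$ can only arise from resolving two complementary unit clauses, the final step has the form $(x) \wedge (\neg x) \res{1} \sqcup$ for some variable $x$. Let $A, B \subseteq \Phi$ be the clauses used to derive $(x)$ and $(\neg x)$, respectively; the read-once property forces $A \cap B = \emptyset$. The formula $A \cup \{(\neg x)\}$ is unsatisfiable (the derivation of $(x)$ together with $(\neg x)$ produces $\sqcup$), so I would extract a minimally unsatisfiable sub-formula $A^* \subseteq A \cup \{(\neg x)\}$ containing the unit clause $(\neg x)$; by Lemma \ref{lem01}, $A^* \in$ MU($1$). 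Symmetrically, I would obtain $B^* \in$ MU($1$) with $(x) \in B^*$.

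I would then set $\Phi' := (A^* \setminus \{(\neg x)\}) \cup (B^* \setminus \{(x)\}) \subseteq \Phi$ and take $F_x$ to be the $x$-side of the splitting of $A^*$ over $x$, and $F_{\neg x}$ the $\neg x$-side of the splitting of $B^*$ over $x$. By Lemma \ref{lem02}, both $F_x$ and $F_{\neg x}$ lie in MU($1$). The $\sigma$-clauses of $\Phi'$ (those not mentioning $x$) split naturally between $F_x$ (those inherited from $A^*$) and $F_{\neg x}$ (those inherited from $B^*$), and disjointness of $A$ and $B$ makes this assignment a disjoint one, so $(F_x, F_{\neg x})$ is a disjunctive splitting.

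The main obstacle will be certifying that $\Phi'$ itself is minimally unsatisfiable, as required by the splitting definition. Unsatisfiability follows by chaining the ROR derivations from $A^*$ and $B^*$ (after deleting the unit clauses added to form them) and concluding with $(x) \wedge (\neg x) \res{1} \sqcup$. Minimality is the delicate part: I would invoke the explicit chain structure of MU($1$) $2$CNF formulas with unit clauses given by Lemma \ref{1mu}(1)--(2) to rule out alternative refutations of $\Phi' \setminus \{c\}$ for any $c \in \Phi'$. In subtle cases where a cross-cut refutation of $\Phi'$ does exist (for instance, when a clause $(x \vee a) \in A$ can resolve on $x$ with some $(\neg x \vee a') \in B$), I would pass to a further sub-formula of $\Phi'$ and work with a different splitting variable $x'$ arising from this structural analysis, since the variable over which the disjunctive splitting exists need not coincide with the variable of the original refutation's final step.
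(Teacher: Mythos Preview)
Your $(\Leftarrow)$ direction is essentially the paper's argument.

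For $(\Rightarrow)$ the paper takes a much shorter path that avoids your main obstacle entirely. Rather than start from an arbitrary ROR refutation and then extract MU subformulas $A^*,B^*$ after the fact, the paper first assumes without loss of generality that $\Phi$ is \emph{minimally} ROR. With that assumption in hand, if $(x)\wedge(\neg x)\res{1}\sqcup$ is the final step and $F_x^x,F_{\neg x}^{\neg x}\subseteq\Phi$ are the (disjoint) clause sets used to derive $(x)$ and $(\neg x)$, then the stripped formulas $F_x,F_{\neg x}$ are \emph{automatically} minimal unsatisfiable: were either one not minimal, deleting the redundant clause from $\Phi$ would still leave a ROR refutation, contradicting the minimal-ROR hypothesis. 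Since each of $F_x,F_{\neg x}$ contains a unit clause (coming from a stripped $(x\vee L_i)$ or $(\neg x\vee K_j)$), Lemma~\ref{lem01} places both in MU($1$) immediately, and the paper simply declares $(F_x,F_{\neg x})$ to be the required disjunctive splitting. No extraction of $A^*,B^*$, no invocation of the chain structure in Lemma~\ref{1mu}(1)--(2), and no passage to an alternate splitting variable is needed.

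Your detour also carries a gap you did not flag: to extract an MU subformula $A^*\subseteq A\cup\{(\neg x)\}$ that \emph{contains} $(\neg x)$ you need $A$ itself to be satisfiable, and nothing about an arbitrary ROR refutation guarantees this. The paper's minimally-ROR reduction is exactly what makes the two halves tight and renders this issue moot.

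Regarding the obstacle you single out --- that the definition of ``splitting'' formally requires the ambient $\Phi'$ to be MU --- your instinct is sound, and in fact the paper's write-up does not discharge this point explicitly either (recall that Appendix~\ref{noMU} exhibits a minimally-ROR $2$CNF that is not MU). The paper simply asserts $(F_x,F_{\neg x})$ is a disjunctive splitting once both components are MU and share no $\sigma$-clause; this is also all that the downstream {\bf NP}-completeness argument needs. Your proposed workaround via structural case analysis and a change of splitting variable is too vague to stand as a proof, and becomes unnecessary once you adopt the minimally-ROR reduction.
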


\begin{proof}
	Suppose, there exists a sub-formula $\Phi' \subseteq \Phi$ with disjunctive splitting $(F_x, F_{\neg x})$,
	where $F_x, F_{\neg x} \in $ MU(1). We have that $F_x$ and $F_{\neg x}$ each contain at least one unit clause. Now we reconstruct the clauses of 
	$F_x$ and $F_{\neg x}$ by adding the removed literal $x$ (resp. $\neg x$) to the clauses in $F_x$ (resp. $F_{\neg x}$). These new formulas are denoted as $F_x^x$ and $F_{\neg x}^{\neg x}$.

	From lemma \ref{1mu}, every formula in MU(1) with a unit-clause has a read-once resolution refutation. We also have that $(x)$ and $(\neg x)$ do not occur in the splitting formulas. Thus, we get $F_x \rores{} \sqcup$, $F_{\neg x} \rores{} \sqcup$, $F_x^x \rores{} (x)$, and $F_{\neg x}^{\neg x} \rores{} (\neg x)$. Now we have to guarantee that there is a read-once resolution for $\Phi$.
	$(F_x, F_{\neg x})$ is disjunctive splitting. Thus, no clause of $\Phi$ occurs in both $F_x^x$ and in $F_{\neg x}^{\neg x}$. We can combine the resolutions $F_x^x \rores{} (x)$ and $F_{\neg x}^{\neg x} \rores{} (\neg x)$, with the resolution step $(x) \wedge (\neg x) \res{1} \sqcup$ to yield $\Phi \rores{} \sqcup$, since $F_x^x,F_{\neg x}^{\neg x} \subseteq \Phi$.
	
	\comment{We now examine the two possible cases.
	
	{\em Case 1}: At least one of the formulas $F_x$ or $F_{\neg x}$ contains two unit clauses. Without loss of generality, we assume that $F_x$ contains two unit clauses. At least one of the unit clauses comes from the splitting for $x$, that is $(x \vee L)$ leads to the unit-clause $(L)$ in $F_x$. 
	
	We can organize the refutation as follows:
	\begin{enumerate}
	\item From lemma \ref{1mu}, $F_x$ has the form $(L), (\neg L \vee L_1), (\neg L_1 \vee L_2), \ldots,  (\neg L_{t-1} \vee L_t), (\neg L_t \vee \neg K), (K)$. Then, we 
	perform the following read-once resolution steps for $F_x^x$:\\
	$(x \vee L), (L \vee L_1)\res{} (x \vee L_1)$, then $(x \vee L_1), (\neg L_1 \vee L_2) \res{} (x \vee L_2)$, and so on. Finally we obtain the
	clause $(x)$. Please note, that the resolvents always contain the literal $x$. 
	If $F_{\neg x}$ is in MU(1), then, by lemma \ref{1mu}, there is ROR 
	refutation
	and therefore a ROR derivation of $ F_{\neg x}^{\neg x} \rores{} \neg x$.
    Since no resolvent of the derivation of $F_x^x \rores{}x$ can occur in
    the ROR proof of $\neg x$, we obtain a ROR refutation.
    \end{enumerate}
    
	{\em Case 2:} $F_x$ and $F_{\neg x}$ contain exactly one unit clause.
	The unit clause in $F_x$ and the unit clause in $F_{\neg x}$ come from the splitting. By lemma \ref{1mu}, the formulas have the form\\
	$F_x= L, (\neg L \vee L_1), (\neg L_2 \vee L_3), \ldots, (\neg L_t \vee K),$\\
	\hspace*{30mm} $ (\neg K \vee S_1), (\neg S_1 \vee S_2), \ldots, (\neg S_q \vee R), \ \ (\neg K \vee P_1), (\neg P_1 \vee P_2), \ldots, (\neg P_m \vee \neg R)$
	
	and\\
	$F_{\neg x} = L', (\neg L' \vee L_1'), (\neg L'_2 \vee L'_3), \ldots, (\neg L'_t \vee K'),$\\
	\hspace*{30mm} $ (\neg K' \vee S'_1), (\neg S'_1 \vee S'_2), \ldots, (\neg S'_q \vee R'), \ \ (\neg K' \vee P'_1), (\neg P'_1 \vee P'_2), \ldots, (\neg P'_m \vee \neg R')$\\
	
	It remains to show that the formula $\Phi'$ has a read once resolution refutation. The first step is that we in $F_x^x$ and $F_{\neg x}^{\neg x}$
	we resolve the first part of the formulas as follows:\\
	$(x \vee L), (\neg L_1 \vee L_2), \ldots, (\neg L_t \vee K) \rores{} (x \vee K)$ and for $F_{\neg x} \rores{} (\neg x \vee K')$. These resolution steps have no effect to the following resolution steps, because every resolvent contents either $x$ or $\neg x$.\\
	In order to simplify the presentation, we assume w.l.o.g that now the formula have the form\\
	$F_x^x= (x \vee K), (\neg K \vee S_1), (\neg S_1  \vee R),  (\neg K \vee P_1), (\neg P_1 \vee \neg R)$
	
	and\\
	$F_{\neg x} = (\neg x \vee K'), (\neg K' \vee S'_1), (\neg S'_1 \vee R'),  (\neg K' \vee P'_1), (\neg P_1 \vee \neg R')$\\
	The most critical case is that $K, K', P, P', Q, Q', S_1, S_2$ are pairwise distinct and $S_1=S_1', S_2' = S_2$.\\
	Then, a read once resolution refutation is\\
	 $(K \rightarrow S_1, S_1 \rightarrow R) \res{} (K \rightarrow R), (K \rightarrow S_2), (S_2 \rightarrow \neg R) \res{} (K \rightarrow \neg R)$\\
	$( K \rightarrow R), (K \rightarrow \neg R) \res{} (\neg K)$.\\
	For $F_{\neg x}^{\neg x}$, we derive similarly $\neg K'$. So far we have a read once resolution derivation. We complete the refutation with
	$(x \vee K), (\neg K \res{} x$, $(\neg x \vee K'), \neg K' \res{} \neg x$,
	and $x, \neg x \res{} \sqcup$. Altogether that leads to a read once resolution refutation.\\
}	
	
	Now suppose, $\Phi \in $ ROR and without loss of generality is minimally ROR. We will show that $\Phi$ contains the desired splitting. Let 
	$x \wedge \neg x \res{} \sqcup$ the last resolution step in the read-once resolution refutation. Furthermore,
	let $F_x^x$ ($F_{\neg x}^{\neg x}$ respectively) be the set of original clauses from $\Phi$ used in the derivation of $x$ ($\neg x$ respectively). These sets have no clause in common because together they form a read-once resolution refutation for $\Phi$.
	
	 The formulas have the form
	\[F_x^x = (x \vee L_1) \wedge \ldots \wedge  (x \vee L_t) \wedge \sigma_1 \text{ and } F_{\neg x}^{\neg x} = (\neg x \vee K_1) \wedge \ldots \wedge  (\neg x \vee K_r) \wedge \sigma_2,\]
	 where $\sigma_1 \cap \sigma_2 = \emptyset$.
	
	Thus, we can construct the formulas 
	\[F_x = (L_1) \wedge \ldots \wedge  (L_t) \wedge \sigma_1 \text{ and } F_{\neg x} = (K_1) \wedge \ldots \wedge  (K_r) \wedge \sigma_2,\]
	 where $\sigma_1 \cap \sigma_2 = \emptyset$.
	 
	By construction, $F_x \rores{} \sqcup$ and $F_{\neg x} \rores{} \sqcup$. Both $F_x$ and $F_{\neg x}$ are minimal unsatisfiable. Otherwise, $\Phi$ would not be minimally ROR. Thus, $(F_x,F_{\neg x})$ is a disjunctive splitting. Both $F_x$ and $F_{\neg x}$ contain unit clauses. Thus, by lemma \ref{lem01}, $F_x$ and $F_{\neg x}$ are MU($1$).
\comment{	 If the formula $F_x$ contains a minimal unsatisfiable sub-formula with some of the unit clauses $L_i$, then the sub-formula is in MU(1), see lemma \ref{1mu}. If additionally the formula $F_{\neg x}$ contains a minimal sub-formulas with a unit clause $K_j$,
	then the sub-formula is in MU(1), too.
	Now we add to the sub-formulas the omitted literals $x$ and $\neg x$. The resulting formulas describe the desired splitting 
	into two disjunctive MU(1) formulas.\\
	Suppose, one of the formulas $F_x$ or $F_{\neg x}$ contains no minimal unsatisfiable formula which includes a unit clause
	generated by omitting $x$ resp. $\neg x$, say that is the case for $F_x$. Thus, every minimal unsatisfiable sub-formula 
	$H \subseteq F_x$ has no unit clause $L_1, \ldots, L_t$. If $H$ contains other unit clauses, then $H$ is in MU(1) and we obtain our
	desired splitting providing that is the case for $F_{\neg x}$, too.\\
	Thus, it remains to discuss the case, that every minimal unsatisfiable sub-formula $H \subseteq F_x$ and every minimal unsatisfiable 
	sub-formula $T \subseteq F_{\neg x}$ contain no unit-clause.\\
	But we there is read once resolution refutation $F_x^x \rores{} x$. Therefore, for some of the literals $L_{i_t}$, there is
	a resolution path from $(x \vee L_{i_t})$ to some of the clauses in $H$. Without the literal $x$ that means there is a resolution path from $L_{i_j}$ to some clause in $H$ using only clauses in $F_x$. In that case, $F_x$ contains a minimal unsatisfiable formula with 
	unit clause $L$ in contradiction to our assumption.	}
\end{proof}

We will now prove the {\bf NP-completeness} or the ROR problem for $2$CNF formulas.
Instead of using the vertex-disjoint cycle problem, we will be reducing from the edge-disjoint cycle problem for directed graphs.

\begin{definition}
Given a directed graph $G$ and two distinct vertexes $s$ and $t$, the {\bf edge-disjoint cycle problem} (C-DEP) 
consists of finding a pair of edge-disjoint paths in $G$, one from $s$ to $t$ and the other from $t$ to $s$.
\end{definition}

The problem is {\bf NP-complete}. For two pairs of vertexes, the edge-disjoint path problem
is {\bf NP-complete} \cite{EIS76}. We can reduce the edge-disjoint path problem to C-DEP the same way we reduced 2-DPP to C-DPP.

\begin{theorem}
	The ROR problem for $2$CNF formulas is {\bf NP-complete}.
\end{theorem}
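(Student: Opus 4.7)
The plan is to prove membership in $\mathbf{NP}$ via the obvious certificate (a ROR refutation has length at most $m$, since each clause is consumed at most once), and to establish $\mathbf{NP}$-hardness by reducing from C-DEP using exactly the graph-to-formula gadget of Theorem \ref{varnp}. The conceptual shift from the Var-ROR case is that, whereas Var-ROR forbade variable reuse and matched vertex-disjointness of paths, ROR forbids clause reuse and matches edge-disjointness of paths, which is precisely what C-DEP asks about; since each edge of $G$ produces exactly one clause of $\Phi$, the correspondence is tight.

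Given a C-DEP instance $(G,s,t)$, construct $\Phi$ as in Theorem \ref{varnp}. For the forward direction, assume $G$ has edge-disjoint paths $w_1$ from $s$ to $t$ and $w_2$ from $t$ to $s$. The clause-sets $\Phi_1$ and $\Phi_2$ induced by the edges of $w_1$ and $w_2$ are disjoint and each form a chain of implications. Resolving along each chain (using each of its clauses exactly once) yields $(\neg x_0)$ from $\Phi_1$ and $(x_0)$ from $\Phi_2$, after which $(x_0) \wedge (\neg x_0) \res{1} \sqcup$ completes a ROR refutation whose read-once property is inherited directly from edge-disjointness of $w_1$ and $w_2$.

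For the backward direction, suppose $\Phi$ is in ROR and take a minimally ROR sub-formula $\Phi' \subseteq \Phi$. By Theorem \ref{rorsplit}, $\Phi'$ admits a disjunctive splitting $(F_y, F_{\neg y})$ over some variable $y$ with $F_y, F_{\neg y} \in$ MU(1). The clause-disjoint sub-formulas $F_y^y, F_{\neg y}^{\neg y} \subseteq \Phi'$ derive $(y)$ and $(\neg y)$ using each of their clauses exactly once. Interpreted through the gadget, these two derivations encode implication chains in $G$ that, together with the restored literals at the endpoints, describe a walk from $s$ to $t$ and a walk from $t$ to $s$; edge-disjointness of the walks follows from the disjunctive splitting, and from edge-disjoint walks between the same endpoints the required edge-disjoint simple paths can be extracted in the usual way.

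The delicate point is the backward direction. Theorem \ref{rorsplit} does not pin down the splitting variable as $x_0$, and in principle an MU(1) formula in $2$CNF may take the branching ``bowtie'' form permitted by part 2 of Lemma \ref{1mu}. For a splitting over $x_0$ itself, the branches are forced to be pure chains: clauses of $\Phi$ that omit $x_0$ all have the form $(\neg x_i \vee x_j)$, so only forward-chaining implications $x_i \to x_j$ are derivable inside $F_{x_0}$, and the production of opposing literals $R$ and $\neg R$ required by a bowtie is ruled out. To handle the remaining case (a splitting over some $x_i$ with $i \neq 0$), I would mirror the inductive peeling argument used for Var-ROR in Theorem \ref{varnp}: locate a resolution step $(L \vee \ell) \wedge (\neg \ell \vee K) \res{1} (L \vee K)$ with $\ell \neq \pm x_0$, contract the corresponding edge pair in $G$, apply the induction hypothesis to the reduced formula, and lift the resulting edge-disjoint cycle back to $G$; the base case is $\Phi' = (x_0 \to \neg x_0) \wedge (\neg x_0 \to x_0)$, which forces $(s,t),(t,s) \in E$ and immediately supplies the required edge-disjoint cycle.
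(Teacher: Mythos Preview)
Your proof lands on the paper's argument, but by a longer road. The paper's backward direction is \emph{exactly} the inductive peeling you sketch in your final paragraph: take $\Phi'\subseteq\Phi$ minimally ROR, pick any leaf-level resolution step $(L\to K)\wedge(K\to R)\res{1}(L\to R)$ with both parents in $\Phi'$, observe $(L\to R)\notin\Phi'$ by minimality, replace the two edges by the contracted one, and recurse. There is no appeal to Theorem~\ref{rorsplit}, no case split on the splitting variable, and no restriction that the matching literal avoid $\pm x_0$.

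Your detour through Theorem~\ref{rorsplit} is not incorrect---your chain-versus-bowtie analysis for the $y=x_0$ case is sound, since the only $2$-clauses omitting $x_0$ are mixed-sign $(\neg x_i\vee x_j)$---but it is superfluous once the induction is in hand. More importantly, your side-condition $\ell\neq\pm x_0$ in the inductive step is neither needed nor justified: you do not argue that such a step exists in every minimally ROR $\Phi'$ falling under your ``$y\neq x_0$'' case, and there is no obvious reason it must. The paper simply allows any matching variable; contracting through $s$ or $t$ still yields a valid reduced C-DEP instance, and the edge-disjoint cycle lifted back may pass through $s$ or $t$ internally, but---as you yourself remark---edge-disjoint walks between fixed endpoints contain edge-disjoint simple paths. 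Drop the Theorem~\ref{rorsplit} layer and the restriction on $\ell$, and what remains is the paper's proof verbatim.
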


\begin{proof}
	ROR is in {\bf NP} for arbitrary formulas in CNF \cite{IM95}. Thus, we only need to show {\bf NP-hardness}. That will be done by a reduction from C-DEP.
	
	From $G=(V, E)$, $s$, and $t$ we construct a formula $\Phi$ in $2$CNF as follows:
	\begin{enumerate}
	\item For each vertex $v_i \in V - \{s,t\}$, create the variable $x_i$.
	\item Create the variable $x_0$.
	\item Let $v_i,v_j \in V - \{s,t\}$.
	\begin{enumerate}
	\item If $(s,v_i) \in E$ add the clause $(x_0 \rightarrow x_i)$ to $\Phi$.
	\item If $(t,v_i) \in E$ add the clause $(\neg x_0 \rightarrow x_i)$ to $\Phi$.
	\item If $(v_i,s) \in E$ add the clause $(x_i \rightarrow x_0)$ to $\Phi$.
	\item If $(v_i,t) \in E$ add the clause $(x_i \rightarrow \neg x_0)$ to $\Phi$.
	\item If $(v_i,v_j) \in E$ add the clause $(x_i \rightarrow x_j)$ to $\Phi$.
	\end{enumerate}
	\end{enumerate}
	
	 Assume that $G$ has two edge-disjoint paths,
	 \[w_1 = s, v_{i_1}, \ldots, v_{i_j}, t \text{ and } w_2 = t, v_{i_{j+1}}, \ldots, v_{i_k}, s.\]
	  Thus, there exist $2$CNF formulas $\Phi_1$ and $\Phi_2$ such that:
	 \begin{eqnarray*}
	 \Phi_1 & = & \{(x_0 \rightarrow x_{i_1}),(x_{i_1} \rightarrow x_{i_2}), \ldots, (x_{i_j} \rightarrow \neg x_0)\} \\ 
	 \Phi_2 & = & \{(\neg x_0 \rightarrow x_{i_{j+1}}),(x_{i_{j+1}} \rightarrow x_{i_{j+2}}), \ldots, (x_{i_k} \rightarrow x_0)\}.
	 \end{eqnarray*} 
	 
	 Obviously, $\Phi_1 \rores{} (\neg x_0)$ and $\Phi_2 \rores{} (x_0)$.
	Note that $x_0$ has not been used as a matching variable. Since $w_1$ and $w_2$ are edge-disjoint, we have that $\Phi_1 \cap \Phi_2 = \emptyset$.
	Thus, $\Phi_1 \cup \Phi_2 \rores{} \sqcup$. This means that $\Phi \supseteq \Phi_1 \cup \Phi_2$ is in ROR.
	
	Now assume that $\Phi$ is in ROR. \\
	Let $\Phi' \subseteq \Phi$ be minimally ROR. We have that $\Phi'$ contains clauses with $x_0$ and $\neg x_0$. Otherwise,
	the formula would be satisfiable by setting each $x_i$ to {\bf true}.
	
	We proceed by an induction on the number of clauses in $\Phi'$.
	
	The shortest formula is $\Phi' = (x_0 \rightarrow \neg x_0) \wedge (\neg x_0 \rightarrow x_0)$. This $\Phi'$ is generated when $(s,t),(t,s) \in E$. These edges form the desired edge-disjoint paths.
	
	Let $(L \rightarrow K) \wedge (K \rightarrow R) \res{1} (L \vee R)$ be a resolution step where $(L \rightarrow K) \in \Phi'$ and $(K \rightarrow R) \in \Phi'$. Note that $( L \rightarrow R) \not \in \Phi'$. Otherwise, $\Phi'$ would not be minimally ROR. 
	
	In a read-once refutation, we remove the parent clauses from $\Phi$ and add the resolvent $(L \rightarrow R)$.
	This new formula has a read-once resolution refutation and can be considered as obtained by a reduced graph without the edges $ L \rightarrow K, K \rightarrow R$ but with the edge $L \rightarrow R$. By the induction hypothesis, this new graph contains the desired edge-disjoint cycle. If we replace the edge $L \rightarrow R$ in this cycle with $L \rightarrow K$ and $K \rightarrow R$, then we construct the desired edge-disjoint cycle in $G$.
\end{proof}

By lemma \ref{lem04}, the problem of determining if an MU-formula in $2$CNF has a disjunctive splitting whose splitting formulas are in MU(1) can be decided in
polynomial time. Hence, for MU-formulas in $2$CNF, the ROR problem is solvable in polynomial time.

\begin{corollary}
	The ROR problem for minimal unsatisfiable $2$CNF formulas is in {\bf P}.
\end{corollary}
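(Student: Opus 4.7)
The plan is to combine Theorem \ref{rorsplit} with the hypothesis of minimal unsatisfiability to reduce the ROR decision problem to a single disjunctive-splitting check, which Lemma \ref{lem04} then handles in polynomial time.

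First, I would apply Theorem \ref{rorsplit} to rephrase ROR membership for $\Phi$: we need a sub-formula $\Phi' \subseteq \Phi$ admitting a disjunctive splitting $(F_x, F_{\neg x})$ with $F_x, F_{\neg x} \in$ MU(1). The key observation is that any such $\Phi'$ must itself be unsatisfiable, because $F_x^x \cup F_{\neg x}^{\neg x}$ can be read-once refuted by combining the refutations of the two MU(1) pieces through $x$. Since $\Phi$ is minimally unsatisfiable, no proper sub-formula of $\Phi$ is unsatisfiable, so $\Phi' = \Phi$. This collapses the outer existential search over sub-formulas into a single check on $\Phi$ itself.

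Next, by Lemma \ref{lem02}, whenever $\Phi$ is a minimal unsatisfiable $2$CNF formula, any splitting $(F_x, F_{\neg x})$ automatically has both components in MU(1) (since each contains a unit clause). Consequently the condition further simplifies to: does there exist a variable $x$ such that the canonical splitting $(F_x, F_{\neg x})$ of $\Phi$ is disjunctive? Lemma \ref{lem04} shows precisely this test is in {\bf P}: for each variable $x$ we form the candidate MU formulas $F_x$ and $F_{\neg x}$ in polynomial time (using the linear-time $2$-SAT machinery), then check whether the non-$x$-clauses used by the two sides are disjoint, and by Lemma \ref{lem03} the disjunctive splitting, if one exists, is unique so no further search is required.

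The most delicate step will be the first one, namely confirming that minimal unsatisfiability forces $\Phi' = \Phi$; everything else is bookkeeping on results already established. Once that reduction is in place, iterating Lemma \ref{lem04} over at most $n$ variables yields a polynomial-time decision procedure, establishing the corollary.
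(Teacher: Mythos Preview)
Your proposal is correct and follows essentially the same route as the paper, which dispatches the corollary in a single sentence by invoking Lemma~\ref{lem04} (implicitly relying on Theorem~\ref{rorsplit} and Lemma~\ref{lem02} exactly as you spell out). Your additional step arguing that minimal unsatisfiability forces $\Phi'=\Phi$ is the natural way to make that implicit reliance explicit, and it is sound.
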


\comment{
\begin{lemma}
	Every unsatisfiable Horn formula has a ROR refutation, which can be computed in polynomial time.
\end{lemma}

\begin{proof}
Let $\Phi$ be an unsatisfiable Horn formula. Since the satisfiability problem for Horn formulas can be solved in polynomial time, we can compute a minimal unsatisfiable sub-formula, say $\Phi'$, in polynomial time. The formula is in MU(1) and has therefore a ROR refutation. Every minimal unsatisfiable Horn formula contains as least one positive unit clause. Let 
$\Phi' = (x) \wedge (\neg x \vee \pi_1) \wedge \ldots \wedge (\neg x \vee \pi_m) \wedge \sigma$, where in $\sigma$ the variable $x$ do not occur. The formula $\Phi'_x = \pi_1 \wedge \ldots \pi_m \wedge \sigma$ is minimal unsatisfiable
and in MU(1). Suppose, we have a read-once resolution refutation 
$\Phi'_x \rores{} \sqcup$. Then, we get a read-once resolution derivation  $(\neg x \vee \pi_1) \wedge \ldots \wedge(\neg x \vee \pi_m) \wedge \sigma \rores{} \neg x$ and finally with the unit clause $(x)$ a ROR refutation $\Phi \rores{} \sqcup$.\\
That leads to the desired procedure: Forward we apply unit propagation with
$\Phi$ and backwards we construct the ROR refutation. In a certain sense,
that looks like generating an input resolution refutation given a unit resolution refutation.
\end{proof}	
}

\section{Copy Complexity of 2CNF formulas}
\label{copy}

Let $\Phi$ be an unsatisfiable formula in CNF and let $\Lambda$ be a tree-like resolution refutation of $\Phi$. For each clause $\pi_i \in \Phi$, let $\Lambda_i$ be the number of times $\pi_i$ is used in a resolution step of $\Lambda$.

\begin{definition}
	A CNF formula $\Phi$ has a {\bf Copy Complexity} of $k$, if there exists $\Lambda$ such that $\Lambda_i \le k$ for $i=1, \ldots, m$.
\end{definition}

Note that in a tree-like refutation reusing a resultant clause requires the reuse of the clauses originally in $\Phi$. Thus, by limiting the number of times $\Lambda$ can use each clause in $\Phi$ we also limit the number of times each resultant clause can be used.

We can equivalently define copy complexity as follows.

\begin{definition}
	A CNF formula $\Phi$ has a {\bf Copy Complexity} of $k$, if there exists a multi-set of CNF clauses, $\Phi'$ such that:
	\begin{enumerate}
		\item Every clause in $\Phi$ appears at most $k$ times in $\Phi'$.
		\item Every clause in $\Phi'$ appears in $\Phi$.
		\item $\Phi'$ has a read-once refutation.
	\end{enumerate}
\end{definition}

Thus, if a formula, $\Phi$, has a copy complexity of $1$ then the formula has a read-once resolution refutation.

We can extend the concept of copy complexity to classes of CNF formulas.

\begin{definition}
	A class of CNF formulas has a {\bf Copy Complexity} of $k$, if every formula in that class has a copy complexity of $k$.
\end{definition}

We will now show that $2$CNF has a copy complexity of $2$. This means that we can always prove unsatisfiablity using each clause at most twice.

\begin{lemma}
	\label{rolem}
	Let $\Phi$ be a formula in $2$CNF. If we can prove $(x_i)$, then we can prove $(x_i)$ using each clause no more than once.
\end{lemma}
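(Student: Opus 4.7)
My plan is to reduce to a minimal subset $\Phi' \subseteq \Phi$ with $\Phi' \vdash (x_i)$, and then invoke the structural characterization of minimally unsatisfiable $2$CNF formulas established in Lemma \ref{1mu} to exhibit a read-once derivation of $(x_i)$.

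If $(x_i) \in \Phi$ the empty derivation is already read-once, so assume $(x_i) \notin \Phi$ and pick $\Phi' \subseteq \Phi$ minimal with $\Phi' \vdash (x_i)$; by minimality, removing any clause of $\Phi'$ yields a sub-formula that no longer derives $(x_i)$. The main case is when $\Phi'$ is satisfiable. There $\Phi' \cup \{(\neg x_i)\}$ is unsatisfiable (because $\Phi' \models x_i$), and I claim it is minimally unsatisfiable: removing $(\neg x_i)$ returns the satisfiable $\Phi'$, and removing any other clause $C$ yields a formula whose satisfying assignment with $x_i$ false (which exists because $\Phi' \setminus \{C\} \not\vdash (x_i)$) also satisfies $(\neg x_i)$. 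Hence $\Phi' \cup \{(\neg x_i)\}$ is a minimally unsatisfiable $2$CNF formula containing the unit clause $(\neg x_i)$, and Lemma \ref{1mu} pins down its structure.

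I then adapt the read-once refutation of $\Phi' \cup \{(\neg x_i)\}$ supplied by Lemma \ref{1mu} part 3 into a read-once derivation of $(x_i)$ from $\Phi'$. In the two-unit-clause form $\Phi' \cup \{(\neg x_i)\} = (\neg x_i), (x_i \vee L_1), (\neg L_1 \vee L_2), \ldots, (\neg L_{t-1} \vee L_t), (\neg L_t \vee \neg K), (K)$, I process the chain in reverse from the opposite unit $(K)$: resolve $(K)$ with $(\neg L_t \vee \neg K)$ to get $(\neg L_t)$, then successively with $(\neg L_{t-1} \vee L_t), \ldots, (x_i \vee L_1)$ to finally produce $(x_i)$, using each clause of $\Phi'$ exactly once. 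In the one-unit-clause ``Y''-form from Lemma \ref{1mu} part 2, I first derive $(\neg K)$ from the Y-portion exactly as in the proof of Lemma \ref{1mu} part 3, then chain $(x_i \vee L_1), (\neg L_1 \vee L_2), \ldots, (\neg L_t \vee K)$ pairwise to obtain $(x_i \vee K)$, and finally resolve $(x_i \vee K)$ with $(\neg K)$ to yield $(x_i)$. In both sub-cases each clause of $\Phi'$ is used exactly once.

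The remaining case is when $\Phi'$ itself is unsatisfiable, which I handle by induction on the number of clauses of $\Phi$. If $x_i$ appears in only a single clause $C_0 = (x_i \vee L)$ of $\Phi'$, then $\Phi' \setminus \{C_0\}$ is an unsatisfiable $2$CNF formula implying $(\neg L)$, so by the inductive hypothesis it admits a read-once derivation of $(\neg L)$, which combined with $C_0$ via one further resolution step yields $(x_i)$. If instead $x_i$ appears in two or more clauses of $\Phi'$, I argue that some proper satisfiable sub-formula of $\Phi'$ already derives $(x_i)$, contradicting the minimality of $\Phi'$. The main obstacle will be this very last point: verifying that a minimally-for-$(x_i)$ unsatisfiable $\Phi'$ with multiple $x_i$-clauses cannot exist, which amounts to a careful case analysis of how resolution derivations of $(x_i)$ must use the clauses containing $x_i$.
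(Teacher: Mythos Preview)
Your Case 1 is correct and rather elegant: reducing to a minimal $\Phi'$, observing that $\Phi'\cup\{(\neg x_i)\}$ is MU with a unit clause, and then reading off a read-once derivation of $(x_i)$ from the chain/Y structure of Lemma~\ref{1mu}. This is a genuinely different route from the paper's proof, which works directly in the implication graph: take any path from $\bar x_i$ to $x_i$, locate the \emph{first} edge whose partner edge (from the same clause) has already appeared, and shortcut the path using the two complementary edges of that clause. The graph argument is shorter and self-contained; your argument has the virtue of tying the result to the MU(1) structure theory already developed in the paper.

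The gap is Case 2. You yourself flag sub-case 2b as unfinished, and sub-case 2a is also not sound as written: the assertion that ``$\Phi'\setminus\{C_0\}$ is an unsatisfiable $2$CNF formula'' is not justified (if $\Phi'$ is minimally unsatisfiable, removing $C_0$ makes it \emph{satisfiable}), and even when $\Phi'\setminus\{C_0\}\models(\neg L)$ semantically, the inductive hypothesis needs a resolution derivation of $(\neg L)$, which can fail if $L$ is absent from $\Phi'\setminus\{C_0\}$. More fundamentally, under the semantic reading of $\vdash$, an unsatisfiable $\Phi'$ that is minimal for $\Phi'\models(x_i)$ is MU, and then $x_i$ cannot occur in $\Phi'$ at all (any clause containing $\neg x_i$, once removed, leaves a formula all of whose models set $x_i=1$, contradicting minimality; and MU forces $\neg x_i$ to occur whenever $x_i$ does). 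So your 2a/2b split by ``number of clauses containing $x_i$'' does not match the actual shape of Case 2.

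The natural repair is not to attack Case 2 head-on but to eliminate it: argue that you may always choose the minimal $\Phi'$ to be satisfiable. Concretely, start from a \emph{simple} path from $\bar x_i$ to $x_i$ in the implication graph of $\Phi$ and let $\Phi'$ be its clause set; one can check this $\Phi'$ is satisfiable (set each variable according to its last occurrence along the path), after which your Case~1 argument finishes the job. But note that this is essentially the paper's graph viewpoint entering through the back door.
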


\begin{proof}
	For a formula $\Phi$, we construct the corresponding implication graph $G$ as follows:
	\begin{enumerate}
	\item For each variable $x_i$, we create the verticies $x_i$ and $\bar{x}_i$. These correspond to the literals $x_i$ and $\neg x_i$.
	\item For each clause $(L \vee K)$, we create the edges edges
	$\bar{L} \rightarrow K$ and $\bar{K} \rightarrow L$.
	\item For each unit clause $(L)$, we create the edge $\bar{L} \rightarrow L$.
	\end{enumerate}
	
	 We know that we can prove $(x_i)$ in $\Phi$, if and only if there exists a path from $\bar{x}_i$ to $x_i$ in $G$. Let $p$ denote this path. If no two edges in $p$ correspond to the same clause, then this path already corresponds to a read-once proof of $(x_i)$.
	
	Let $e$ be the first edge in $p$ such that the other edge corresponding to the same clause as $e$ has already been used in $p$. We can assume without loss of generality that $e$ is $x_j \rightarrow x_k$. Thus, we can break $p$ up as follows:
	\begin{enumerate}
		\item a path, $p_1$, from $\bar{x}_i$ to $\bar{x}_k$,
		\item the edge $\bar{x}_k \rightarrow \bar{x}_j$ (the edge corresponding to the same clause as $e$),
		\item a path, $p_2$, from $\bar{x}_j$ to $x_j$,
		\item the edge $x_j \rightarrow x_k$ (the edge $e$),
		\item and a path, $p_3$, from $x_k$ to $x_i$.
	\end{enumerate}

	This can be seen in Figure \ref{imppath}.
	
	\begin{center}
		\begin{figure}[htb]
			\centering
			\begin{tikzpicture}[scale=1]
			\path (0,0) node[draw,shape=circle] (xi) {$x_i$};
			\path (3,-3) node[draw,shape=circle] (xk') {$\bar{x}_k$};
			\path (6,-3) node[draw,shape=circle] (xj') {$\bar{x}_j$};
			\path (6,0) node[draw,shape=circle] (xj) {$x_j$};
			\path (3,0) node[draw,shape=circle] (xk) {$x_k$};
			\path (0,-3) node[draw,shape=circle] (xi') {$\bar{x}_i$};
			
			\draw[-latex] (xi') -- (xk');
			\draw[-latex] (xk') -- (xj');
			\draw[-latex] (xj') -- (xj);
			\draw[-latex] (xj) -- (xk);
			\draw[-latex] (xk) -- (xi);
			
			\path (1.5,-3.25) node {$p_1$};
			\path (6.25,-1.5) node {$p_2$};
			\path (1.5,.25) node {$p_3$};
			\path (4.5,.25) node {$e$};
			\end{tikzpicture}
			\caption{Example of Path $p$}
			\label{imppath}
		\end{figure}
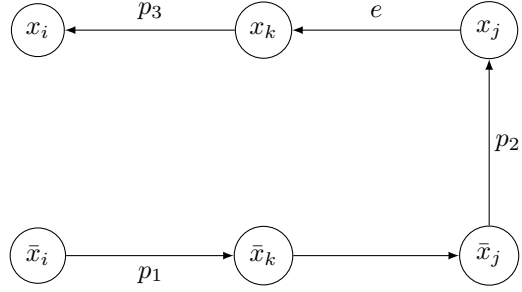
	\end{center}
	
	By our choice of $e$, we know no two edges in $p_1$ or $p_2$ correspond to the same clause.
	Thus, $p_2$ corresponds to a a read-once proof of $(x_j)$. We also have that $p_1$ combined with the edge $\bar{x}_k \rightarrow \bar{x}_j$ is a read once proof of $(\neg x_i \rightarrow \neg x_j)$. Combining these two yields a read-once proof of $(\neg x_i \rightarrow x_i)$ and therefore for $(x_i)$.
\end{proof}

\begin{theorem}
	If a formula in $2$CNF is unsatisfiable, then there exists a resolution refutation in which no clause is used more than twice.
\end{theorem}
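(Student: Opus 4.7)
The plan is to leverage Lemma \ref{rolem} together with the standard implication-graph characterization of $2$SAT unsatisfiability. Specifically, I would argue that an unsatisfiable $2$CNF formula $\Phi$ must contain a variable $x_i$ such that both the unit clauses $(x_i)$ and $(\neg x_i)$ are derivable by resolution from $\Phi$, and then glue together two read-once derivations of these units.

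First, I would construct the implication graph $G$ associated with $\Phi$ exactly as in the proof of Lemma \ref{rolem}. It is a classical fact (Aspvall, Plass, Tarjan) that $\Phi$ is unsatisfiable in $2$CNF if and only if there is a variable $x_i$ such that $x_i$ and $\bar{x}_i$ lie in the same strongly connected component of $G$, i.e.\ there is a path from $\bar{x}_i$ to $x_i$ \emph{and} a path from $x_i$ to $\bar{x}_i$. These paths correspond directly to resolution derivations $\Phi \res{} (x_i)$ and $\Phi \res{} (\neg x_i)$.

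Second, I would invoke Lemma \ref{rolem} twice: since both $(x_i)$ and $(\neg x_i)$ are derivable from $\Phi$, each of them admits a derivation in which no clause of $\Phi$ is used more than once. Let these two read-once derivations be $\Pi_1$ (producing $(x_i)$) and $\Pi_2$ (producing $(\neg x_i)$). Finally, append the resolution step $(x_i) \wedge (\neg x_i) \res{1} \sqcup$ to obtain a refutation $\Pi$ of $\Phi$.

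It remains to bound the usage of each input clause. Within $\Pi_1$ alone, each clause of $\Phi$ appears at most once, and likewise within $\Pi_2$. Since the two sub-derivations are independent sub-trees of $\Pi$ and the final resolution step uses only the derived units, every clause of $\Phi$ is used at most twice in the entire refutation $\Pi$. The main subtlety—and the only real obstacle—is justifying the very first step: that unsatisfiability forces some variable whose \emph{both} polarities are resolution-derivable. This is where the implication-graph/strongly-connected-component argument for $2$SAT is essential; without it, one could only conclude derivability of the empty clause, not of complementary units. Everything else is a direct assembly of the two read-once derivations guaranteed by Lemma \ref{rolem}.
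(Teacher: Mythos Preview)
Your proposal is correct and follows essentially the same approach as the paper: find a variable $x_i$ with both $(x_i)$ and $(\neg x_i)$ derivable, apply Lemma~\ref{rolem} to each, and combine. In fact, you give more justification than the paper does for the existence of such an $x_i$; the paper simply asserts it, whereas you correctly point to the implication-graph/SCC characterization of $2$SAT unsatisfiability.
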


\begin{proof}
	Suppose, the formula $\Phi$ in $2$CNF is unsatisfiable. Then for some $x_i$, we have that:
	\begin{enumerate}
		\item $\Phi \res{} (x_i)$
		\item and $\Phi \res{} (\neg x_i)$.
	\end{enumerate}
	From Lemma \ref{rolem}, we know that we can derive $(x_i)$ using each clause at most once. Similarly we can derive $(\neg x_i)$ using each clause at most once.
	
	Thus, for $\Phi$, there is a resolution refutation using each clause at most twice. That is once in the refutation derivation of $(x_i)$ and once in the resolution derivation of $(\neg x_i)$.
\end{proof}
\section{Conclusion}
\label{conc}

	 In this paper, we discuss the computational complexity of ROR decidability in a well-known classe of
	 CNF formulas, viz., $2$CNF. Prior research had established that the ROR decidability problem is {\bf NP-complete}
	 for $3$CNF formulas. Likewise, it is known that resolution refutations for $3$CNF formulas, even when they exist need not
	 be polynomial sized. For $2$CNF formulas, it is well-known that polynomial sized refutations exist in the
	 general case. We showed that ROR decidability in
	 $2$CNF formulas is {\bf NP-complete}. Additionally, we showed that the  copy complexity of $2$CNF formulas is $2$.
	 This means that every unsatisfiable $2$CNF formula has  a refutation in which each input clause is used at most twice.
	 Furthermore, the optimal length resolution when clause copy is permitted can be  determined in polynomial time.

\newpage
\appendix
\section{ROR Refutation Example}
\label{refexamp}
We now apply read-once resolution refutation to generate a refutation of the $2$SAT instance specified by Formula (\ref{exsys}).
\begin{eqnarray}
\label{exsys}
(x_1 ,x_2) & (\neg x_1, x_3) & (\neg x_1, x_4) \nonumber\\
(\neg x_2, x_3) & (\neg x_2, x_4) & (\neg x_3, x_5) \\
(\neg x_3 , x_6) & (\neg x_4, \neg x_5) & (\neg x_4, \neg x_6) \nonumber
\end{eqnarray}

The application of this read-once resolution refutation to Formula (\ref{exsys}) can be seen in Figure \ref{exfig1}.

\begin{figure}[H]
\footnotesize
\center
\begin{tikzpicture}[scale=1]

\draw (-6.4,1) -- (-8,1) -- (-8,1.5) -- (-6.4,1.5) -- (-6.4,1);
\path (-7.2,1.25) node {$(\neg x_3, x_5)$};

\draw (-6.4,0) -- (-8,0) -- (-8,.5) -- (-6.4,.5) -- (-6.4,0);
\path (-7.2,.25) node {$(\neg x_4, \neg x_5)$};

\draw (-6.4,-1) -- (-8,-1) -- (-8,-.5) -- (-6.4,-.5) -- (-6.4,-1);
\path (-7.2,-.75) node {$(\neg x_1, x_3)$};

\draw (-6.4,-2) -- (-8,-2) -- (-8,-1.5) -- (-6.4,-1.5) -- (-6.4,-2);
\path (-7.2,-1.75) node {$(\neg x_1, x_4)$};

\draw (-6.4,-3) -- (-8,-3) -- (-8,-2.5) -- (-6.4,-2.5) -- (-6.4,-3);
\path (-7.2,-2.75) node {$(\neg x_3, x_6)$};

\draw (-6.4,-4) -- (-8,-4) -- (-8,-3.5) -- (-6.4,-3.5) -- (-6.4,-4);
\path (-7.2,-3.75) node {$(\neg x_4, \neg x_6)$};

\draw (-6.4,-5) -- (-8,-5) -- (-8,-4.5) -- (-6.4,-4.5) -- (-6.4,-5);
\path (-7.2,-4.75) node {$(\neg x_2, x_3)$};

\draw (-6.4,-6) -- (-8,-6) -- (-8,-5.5) -- (-6.4,-5.5) -- (-6.4,-6);
\path (-7.2,-5.75) node {$(\neg x_2, x_4)$};

\draw (-6.4,-7) -- (-8,-7) -- (-8,-6.5) -- (-6.4,-6.5) -- (-6.4,-7);
\path (-7.2,-6.75) node {$(x_1, x_2)$};

\draw (-4.6,0.5) -- (-6.2,0.5) -- (-6.2,1) -- (-4.6,1) -- (-4.6,0.5);
\path (-5.4,0.75) node {$(\neg x_3, \neg x_4)$};
\draw[-latex] (-6.4,1.25) -- (-5.4, 1);
\draw[-latex] (-6.4,0.25) -- (-5.4, 0.5);

\draw (-4.6,-3.5) -- (-6.2,-3.5) -- (-6.2,-3) -- (-4.6,-3) -- (-4.6,-3.5);
\path (-5.4,-3.25) node {$(\neg x_3, \neg x_4)$};
\draw[-latex] (-6.4,-2.75) -- (-5.4, -3);
\draw[-latex] (-6.4,-3.75) -- (-5.4, -3.5);

\draw (-2.8,0) -- (-4.4,0) -- (-4.4,0.5) -- (-2.8,0.5) -- (-2.8,0);
\path (-3.6,0.25) node {$(\neg x_1, \neg x_4)$};
\draw[-latex] (-4.6,0.75) -- (-3.6, 0.5);
\draw[-latex] (-6.4,-0.75) -- (-3.6, 0);

\draw (-2.8,-4) -- (-4.4,-4) -- (-4.4,-3.5) -- (-2.8,-3.5) -- (-2.8,-4);
\path (-3.6,-3.75) node {$(\neg x_2, \neg x_4)$};
\draw[-latex] (-4.6,-3.25) -- (-3.6, -3.5);
\draw[-latex] (-6.4,-4.75) -- (-3.6, -4);

\draw (-1,-0.5) -- (-2.6,-0.5) -- (-2.6,0) -- (-1,0) -- (-1,-0.5);
\path (-1.8,-.25) node {$(\neg x_1)$};
\draw[-latex] (-2.8,.25) -- (-1.8, 0);
\draw[-latex] (-6.4,-1.75) -- (-1.8, -.5);

\draw (-1,-4.5) -- (-2.6,-4.5) -- (-2.6,-4) -- (-1,-4) -- (-1,-4.5);
\path (-1.8,-4.25) node {$(\neg x_2)$};
\draw[-latex] (-2.8,-3.75) -- (-1.8, -4);
\draw[-latex] (-6.4,-5.75) -- (-1.8, -4.5);

\draw (.8,-5) -- (-.8,-5) -- (-.8,-4.5) -- (.8,-4.5) -- (.8,-5);
\path (0,-4.75) node {$(x_1)$};
\draw[-latex] (-1,-4.25) -- (0, -4.5);
\draw[-latex] (-6.4,-6.75) -- (0, -5);

\draw (1,-3) -- (2.6,-3) -- (2.6,-2.5) -- (1,-2.5) -- (1,-3);
\path (1.8,-2.75) node {$\emptyset$};
\draw[-latex] (-1,-.25) -- (1.8, -2.5);
\draw[-latex] (.8,-4.75) -- (1.8, -3);
\end{tikzpicture}
\caption{Read-Once Refutation}
\label{exfig1}
\end{figure}
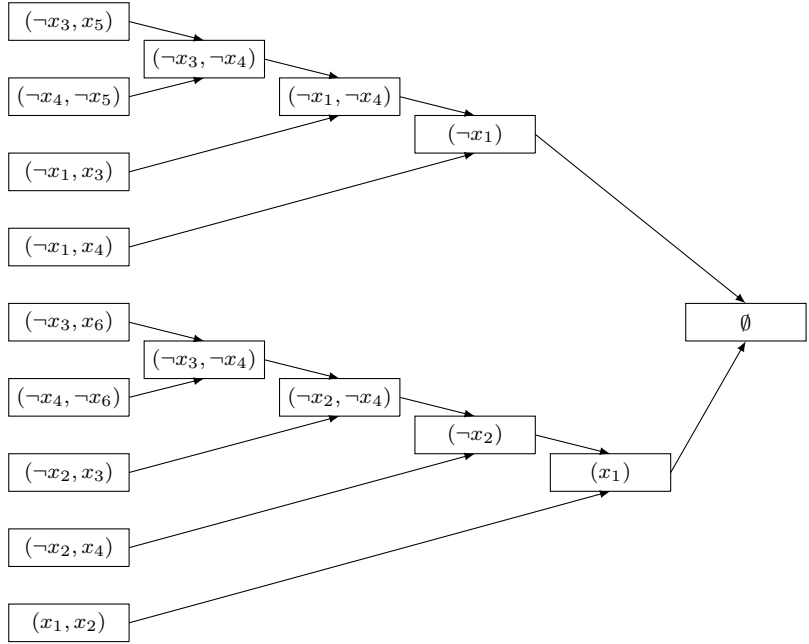

Note that the clause $(\neg x_3, \neg x_4)$ is used twice. However, this is still a read-once refutation since each time the clause $(\neg x_3, \neg x_4)$ is derived different clauses from the original formula are used.
\comment{
\subsection{Tree-like Resolution}
\label{appTLR}

 The application of tree-like resolution refutation to Formula (\ref{exsys}) can be seen in Figure \ref{exfig2}. Note that the clauses $(\neg x_3, x_5)$ and $(\neg x_4, \neg x_5)$ are reused.

\begin{figure}[H]
\footnotesize
\center
\begin{tikzpicture}[scale=1]

\draw (-6.4,1) -- (-8,1) -- (-8,1.5) -- (-6.4,1.5) -- (-6.4,1);
\path (-7.2,1.25) node {$(\neg x_3, x_5)$};

\draw (-6.4,0) -- (-8,0) -- (-8,.5) -- (-6.4,.5) -- (-6.4,0);
\path (-7.2,.25) node {$(\neg x_4, \neg x_5)$};

\draw (-6.4,-1) -- (-8,-1) -- (-8,-.5) -- (-6.4,-.5) -- (-6.4,-1);
\path (-7.2,-.75) node {$(\neg x_1, x_3)$};

\draw (-6.4,-2) -- (-8,-2) -- (-8,-1.5) -- (-6.4,-1.5) -- (-6.4,-2);
\path (-7.2,-1.75) node {$(\neg x_1, x_4)$};

\draw (-6.4,-3) -- (-8,-3) -- (-8,-2.5) -- (-6.4,-2.5) -- (-6.4,-3);
\path (-7.2,-2.75) node {$(\neg x_3, x_5)$};

\draw (-6.4,-4) -- (-8,-4) -- (-8,-3.5) -- (-6.4,-3.5) -- (-6.4,-4);
\path (-7.2,-3.75) node {$(\neg x_4, \neg x_5)$};

\draw (-6.4,-5) -- (-8,-5) -- (-8,-4.5) -- (-6.4,-4.5) -- (-6.4,-5);
\path (-7.2,-4.75) node {$(\neg x_2, x_3)$};

\draw (-6.4,-6) -- (-8,-6) -- (-8,-5.5) -- (-6.4,-5.5) -- (-6.4,-6);
\path (-7.2,-5.75) node {$(\neg x_2, x_4)$};

\draw (-6.4,-7) -- (-8,-7) -- (-8,-6.5) -- (-6.4,-6.5) -- (-6.4,-7);
\path (-7.2,-6.75) node {$(x_1, x_2)$};

\draw (-4.6,0.5) -- (-6.2,0.5) -- (-6.2,1) -- (-4.6,1) -- (-4.6,0.5);
\path (-5.4,0.75) node {$(\neg x_3, \neg x_4)$};
\draw[-latex] (-6.4,1.25) -- (-5.4, 1);
\draw[-latex] (-6.4,0.25) -- (-5.4, 0.5);

\draw (-4.6,-3.5) -- (-6.2,-3.5) -- (-6.2,-3) -- (-4.6,-3) -- (-4.6,-3.5);
\path (-5.4,-3.25) node {$(\neg x_3, \neg x_4)$};
\draw[-latex] (-6.4,-2.75) -- (-5.4, -3);
\draw[-latex] (-6.4,-3.75) -- (-5.4, -3.5);

\draw (-2.8,0) -- (-4.4,0) -- (-4.4,0.5) -- (-2.8,0.5) -- (-2.8,0);
\path (-3.6,0.25) node {$(\neg x_1, \neg x_4)$};
\draw[-latex] (-4.6,0.75) -- (-3.6, 0.5);
\draw[-latex] (-6.4,-0.75) -- (-3.6, 0);

\draw (-2.8,-4) -- (-4.4,-4) -- (-4.4,-3.5) -- (-2.8,-3.5) -- (-2.8,-4);
\path (-3.6,-3.75) node {$(\neg x_2, \neg x_4)$};
\draw[-latex] (-4.6,-3.25) -- (-3.6, -3.5);
\draw[-latex] (-6.4,-4.75) -- (-3.6, -4);

\draw (-1,-0.5) -- (-2.6,-0.5) -- (-2.6,0) -- (-1,0) -- (-1,-0.5);
\path (-1.8,-.25) node {$(\neg x_1)$};
\draw[-latex] (-2.8,.25) -- (-1.8, 0);
\draw[-latex] (-6.4,-1.75) -- (-1.8, -.5);

\draw (-1,-4.5) -- (-2.6,-4.5) -- (-2.6,-4) -- (-1,-4) -- (-1,-4.5);
\path (-1.8,-4.25) node {$(\neg x_2)$};
\draw[-latex] (-2.8,-3.75) -- (-1.8, -4);
\draw[-latex] (-6.4,-5.75) -- (-1.8, -4.5);

\draw (.8,-5) -- (-.8,-5) -- (-.8,-4.5) -- (.8,-4.5) -- (.8,-5);
\path (0,-4.75) node {$(x_1)$};
\draw[-latex] (-1,-4.25) -- (0, -4.5);
\draw[-latex] (-6.4,-6.75) -- (0, -5);

\draw (1,-3) -- (2.6,-3) -- (2.6,-2.5) -- (1,-2.5) -- (1,-3);
\path (1.8,-2.75) node {$\emptyset$};
\draw[-latex] (-1,-.25) -- (1.8, -2.5);
\draw[-latex] (.8,-4.75) -- (1.8, -3);
\end{tikzpicture}
\caption{Tree-Like Refutation}
\label{exfig2}
\end{figure}

\subsection{Dag-like Resolution}
\label{appDLR}

The application of dag-like resolution refutation to Formula (\ref{exsys}) can be seen in Figure \ref{exfig3}.

\begin{figure}[H]
\footnotesize
\center
\begin{tikzpicture}[scale=1]

\draw (-2.8,1) -- (-4.4,1) -- (-4.4,1.5) -- (-2.8,1.5) -- (-2.8,1);
\path (-3.6,1.25) node {$(\neg x_1, x_4)$};

\draw (-2.8,0) -- (-4.4,0) -- (-4.4,0.5) -- (-2.8,0.5) -- (-2.8,0);
\path (-3.6,.25) node {$(\neg x_1, x_3)$};

\draw (-2.8,-1) -- (-4.4,-1) -- (-4.4,-.5) -- (-2.8,-.5) -- (-2.8,-1);
\path (-3.6,-.75) node {$(\neg x_3, x_5)$};

\draw (-2.8,-2) -- (-4.4,-2) -- (-4.4,-1.5) -- (-2.8,-1.5) -- (-2.8,-2);
\path (-3.6,-1.75) node {$(\neg x_4, \neg x_5)$};

\draw (-2.8,-3) -- (-4.4,-3) -- (-4.4,-2.5) -- (-2.8,-2.5) -- (-2.8,-3);
\path (-3.6,-2.75) node {$(\neg x_2, x_3)$};

\draw (-2.8,-4) -- (-4.4,-4) -- (-4.4,-3.5) -- (-2.8,-3.5) -- (-2.8,-4);
\path (-3.6,-3.75) node {$(\neg x_2, x_4)$};

\draw (-2.8,-5) -- (-4.4,-5) -- (-4.4,-4.5) -- (-2.8,-4.5) -- (-2.8,-5);
\path (-3.6,-4.75) node {$(x_1, x_2)$};

\draw (-1,-1) -- (-2.6,-1) -- (-2.6,-1.5) -- (-1,-1.5) -- (-1,-1);
\path (-1.8,-1.25) node {$(\neg x_3, \neg x_4)$};
\draw[-latex] (-2.8,-.75) -- (-1.8, -1);
\draw[-latex] (-2.8,-1.75) -- (-1.8, -1.5);

\draw (.8,-1) -- (-.8,-1) -- (-.8,-0.5) -- (.8,-0.5) -- (.8,-1);
\path (0,-0.75) node {$(\neg x_1, \neg x_4)$};
\draw[-latex] (-2.8,.25) -- (0, -0.5);
\draw[-latex] (-1,-1.25) -- (0, -1);

\draw (.8,-2) -- (-.8,-2) -- (-.8,-1.5) -- (.8,-1.5) -- (.8,-2);
\path (0,-1.75) node {$(\neg x_2, \neg x_4)$};
\draw[-latex] (-1,-1.25) -- (0, -1.5);
\draw[-latex] (-2.8,-2.75) -- (0, -2);

\draw (2.6,-0.5) -- (1,-0.5) -- (1,0) -- (2.6,0) -- (2.6,-0.5);
\path (1.8,-0.25) node {$(\neg x_1)$};
\draw[-latex] (-2.8,1.25) -- (1.8, 0);
\draw[-latex] (.8,-.75) -- (1.8, -0.5);

\draw (2.6,-2.5) -- (1,-2.5) -- (1,-2) -- (2.6,-2) -- (2.6,-2.5);
\path (1.8,-2.25) node {$(\neg x_2)$};
\draw[-latex] (.8,-1.75) -- (1.8, -2);
\draw[-latex] (-2.8,-3.75) -- (1.8, -2.5);

\draw (2.8,-3) -- (4.4,-3) -- (4.4,-2.5) -- (2.8,-2.5) -- (2.8,-3);
\path (3.6,-2.75) node {$(x_1)$};
\draw[-latex] (2.6,-2.25) -- (3.6, -2.5);
\draw[-latex] (-2.8,-4.75) -- (3.6, -3);

\draw (4.6,-1.5) -- (6.2,-1.5) -- (6.2,-2) -- (4.6,-2) -- (4.6,-1.5);
\path (5.4,-1.75) node {$\emptyset$};
\draw[-latex] (2.6,-0.25) -- (5.4, -1.5);
\draw[-latex] (4.4,-2.75) -- (5.4, -2);
\end{tikzpicture}
\caption{DAG-Like Refutation}
\label{exfig3}
\end{figure}
}
\section{Non MU(1) ROR refutation}
\label{noMU}
\begin{example} 
Let 
\begin{eqnarray*}
\Phi & = & \{(\neg x \rightarrow a), (a \rightarrow b), (a \rightarrow c), (c \rightarrow b),\\
& & (b \rightarrow x), (b \rightarrow \neg x), (x \rightarrow a)\}.
\end{eqnarray*}
$\Phi$ is not minimal unsatisfiable and has deficiency 2.
Minimal unsatisfiable sub-formulas of $\Phi$ are:
\begin{eqnarray*}
\Phi_1 & = & \{(\neg x \rightarrow a), (a \rightarrow b), (b \rightarrow x), (b \rightarrow \neg x), (x \rightarrow a) \}\\
\Phi_2 & = & \{(\neg x \rightarrow a), (a \rightarrow c), (c \rightarrow b),\\
& & (b \rightarrow x), (b \rightarrow \neg x), (x \rightarrow a)\}.
\end{eqnarray*}
Note that $\Phi_1$ and $\Phi_2$ have deficiency 1. There is no read-once resolution refutation for $\Phi_1$ or $ \Phi_2$.
However, $\Phi$ has the following read-once resolution refutation:
\begin{enumerate}
\item $(\neg x \rightarrow a) \wedge (a \rightarrow c) \res{1} (\neg x \rightarrow c)$.

\item $(\neg x \rightarrow c) \wedge (c \rightarrow b) \res{1} (\neg x \rightarrow b)$.

\item $ (\neg x \rightarrow b) \wedge (b \rightarrow x) \res{1} (x)$.

\item $(x \rightarrow a) \wedge (a \rightarrow b) \res{1} (x \rightarrow b$).

\item$ (x \rightarrow b) \wedge (b \rightarrow \neg x) \res{1} (\neg x)$.

\item $(\neg x) \wedge (x) \res{1} \sqcup$.
\end{enumerate}
From $\Phi$ we can construct the graph $G$ as follows:
\begin{enumerate}
\item For each literal in $\Phi$, create a vertex in $G$.
\item For each implication $(x \rightarrow a)$ in $\Phi$, create the edge $(x \rightarrow a)$ in $G$.
\end{enumerate}
The result of this can be seen in Figure \ref{cycle}.

\begin{center}
	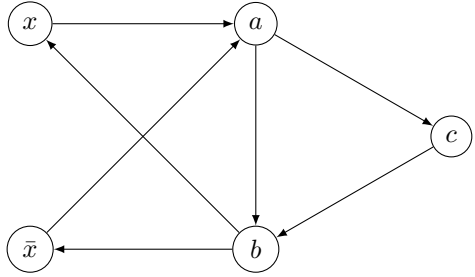
\begin{figure}[htb]
		\centering
		\begin{tikzpicture}[scale=1]
		\path (0,0) node[draw,shape=circle] (x) {$x$};
		\path (0,-3) node[draw,shape=circle] (x') {$\bar{x}$};
		\path (3,0) node[draw,shape=circle] (a) {$a$};
		\path (3,-3) node[draw,shape=circle] (b) {$b$};
		\path (5.6,-1.5) node[draw,shape=circle] (c) {$c$};
		
		\draw[-latex] (x) -- (a);
		\draw[-latex] (x') -- (a);
		\draw[-latex] (a) -- (b);
		\draw[-latex] (a) -- (c);
		\draw[-latex] (c) -- (b);
		\draw[-latex] (b) -- (x);
		\draw[-latex] (b) -- (x');
		\end{tikzpicture}
		\caption{Edge-Disjoint Cycle: $ x$ to $\bar{x}$ and $\bar{x}$ to $ x$}
		\label{cycle}
	\end{figure}
\end{center}

The paths $\neg x, a, b, x$ and $x, a, c, b, \neg x$ form an edge-disjoint cycle from $\neg x$ to $x$ and from $x$ to $\neg x$.

These paths correspond to the follow sub-formulas of $\Phi$:
\begin{eqnarray*}
F_x^x &=& (\neg x \rightarrow a) \wedge (b \rightarrow x) \wedge (a \rightarrow b) \\
F_{\neg x}^{\neg x} &=& (a \rightarrow c) \wedge (c \rightarrow b) \wedge (x \rightarrow a) \wedge ( b \rightarrow \neg x).
\end{eqnarray*}
\end{example}

The following summarizes our observations:
\begin{enumerate}
	\item The formula $\Phi$ is in ROR, but not minimal unsatisfiable. Moreover, any minimal unsatisfiable sub-formula of $\Phi$ has no read-once resolution refutation.
	\item The formula $\Phi$ is minimally ROR and $\Phi$ has deficiency greater than 1.
	\item The formula $\Phi$ has two sub-formulas formulas in MU(1). 
	\item The formula $\Phi$, when considered as directed graph, has an edge-disjoint cycle from $\neg x$ to $x$ and from $x$ to $\neg x$. Each path in  this cycle corresponds to a sub-formula of $\Phi$.
\end{enumerate}
\comment{
The last observation leads to a proof of the {\bf NP-hardness} in which we establish a reduction from the edge-disjoint path problem.
This is similar to the reduction from the vertex-disjoint path problem used to prove the {\bf NP-hardness} of Var-ROR.}

    }
    \comment{
    \section{Minimal ROR}
    In this section we determine the complexity of determining if a $2$CNF formula is minimal read-once.
    
    \begin{definition}
    A $2$CNF formula $\Phi$ is {\em minimal read-once} if $\Phi$ has a read-once refutation but no sub-formula of $\Phi$ has a read once refutation.
    \end{definition}
    
    Let $\Phi$ be a minimal read-once $2$CNF formula. Note that the read-once refutation must use every clause in $\Phi$. Also, $\Phi$ cannot have a shorter read-once refutation.
    
 \subsection{Graphical Representation}
 In the standard graphical representation of a $2$CNF formula, each clause is represented by a pair of edges. This makes it difficult to use the standard graphical representation to find read-once refutations since only one edge from each pair can be used in the cycle.
 
 To avoid this problem, we present the following alternate graphical representation.
 
 Let $\Phi$ be a $2$CNF formula. We construct an undirected graph ${\bf G} = \langle {\bf V,E} \rangle$ as follows:
 
 \begin{enumerate}
 \item For each variable $x_i$ in $\Phi$, add the vertex $x_i$ to ${\bf V}$.
 \item For each clause $\phi_k$ in $\Phi$:
 \begin{enumerate}
 \item If $\phi_k = (x_i,x_j)$, then add the edge $x_i \white{} x_j$ to ${\bf E}$. This is known as a white edge between $x_i$ and $x_j$.
 \item If $\phi_k = (x_i,\neg x_j)$, then add the edge $x_i \lgray{} x_j$ to ${\bf E}$. This is known as a gray edge from $x_j$ to $x_i$.
 \item If $\phi_k = (\neg x_i,x_j)$, then add the edge $x_i \rgray{} x_j$ to ${\bf E}$. This is known as a gray edge from $x_i$ to $x_j$.
 \item If $\phi_k = (\neg x_i,\neg x_j)$, then add the edge $x_i \black{} x_j$ to ${\bf E}$. This is known as a black edge between $x_i$ and $x_j$.
 \end{enumerate}
 \end{enumerate}

Note the following:
\begin{enumerate}
\item If there is a white path between $x_i$ and itself in ${\bf G}$, then the clause $(x_i)$ is derivable from $\Phi$.
\item If there is a black path between $x_i$ and itself in ${\bf G}$, then the clause $(\neg x_i)$ is derivable from $\Phi$.
\item If there are both a white path and a black path between $x_i$ and itself in ${\bf G}$, then $\Phi$ is infeasible.
\item A resolution refutation for $\Phi$ corresponds to a gray cycle in ${\bf G}$ with both a white sub-cycle and a black sub-cycle.
\end{enumerate}

Note that in this graph construction, cycles can use edges and verticies multiple times. This also applies to cycles corresponding to read-once refutations.

\begin{example}
Consider the $2$CNF formula in System \ref{rorexamp}.
\begin{eqnarray}
\label{rorexamp}
(x_1,x_2) & (\neg x_1, x_3) & (\neg x_1, \neg x_3) \nonumber\\
(\neg x_2, x_4) & (\neg x_2,\neg x_4) &
\end{eqnarray}
\end{example}

This system has the following read-once refutation:
\begin{eqnarray*}
(\neg x_1, x_3) \wedge (\neg x_1, \neg x_3) & \res{1} & (\neg x_1) \\
(\neg x_2, x_4) \wedge (\neg x_2, \neg x_4) & \res{1} & (\neg x_2) \\
(\neg x_2) \wedge (x_1, x_2) & \res{1} & (x_1) \\
(\neg x_1) \wedge (x_1) & \res{1} & \sqcup
\end{eqnarray*}

However, the corresponding graph (Figure \ref{badcyc}) does not have a simple gray cycle with white and black sub cycles. Note that the edge $x_1 \white{} x_2$ needs to be used twice.

\begin{figure}[htb]
\centering
\begin{tikzpicture}[scale=1]
\path (0,3) node[draw,shape=circle] (x1) {$x_3$};
\path (3,3) node[draw,shape=circle] (x2) {$x_1$};
\path (6,3) node[draw,shape=circle] (x3) {$x_2$};
\path (9,3) node[draw,shape=circle] (x4) {$x_4$};

\draw (x1) .. controls +(45:1.25) and +(135:1.25) .. (x2);
\draw (x1) .. controls +(-45:1.25) and +(-135:1.25) .. (x2);

\draw (x2) -- (x3);

\draw (x4) .. controls +(135:1.25) and +(45:1.25) .. (x3);
\draw (x4) .. controls +(-135:1.25) and +(-45:1.25) .. (x3);

\path (1.5,2.28) node {\lgbsquare};

\path (1.5,3.72) node {\lgwbsquare};

\path (4.5,3) node {\lgwsquare};

\path (7.5,2.28) node {\lgbsquare};

\path (7.5,3.72) node {\lgbwsquare};
\end{tikzpicture}
\caption{Example Graph}
\label{badcyc}
\end{figure}

\subsection{Special Case}
For a $2$CNF formula to be minimal ROR it needs to have a read-once refutation which uses every constraint. Otherwise, removing an unused constraint from the system will not remove the ROR.

Let $\Phi$ be a $2$CNF formula with such an ROR. We still need to show that no sub-formula of $\Phi$ has a read-once refutation. If $\Phi$ has $m$ clauses then we only need to focus on the $m$ sub-formulas of $\Phi$ which have $(m-1)$ clauses. The problem of checking to see if any of these formulas has a read-once refutation is in {\bf NP}. Thus, the problem of checking to see if {\em none} of these formulas has a read-once refutation is in {\bf coNP}.

 \section{Read-Once Core}
 In this section we study the complexity of the read-once core problem for $2$CNF formulas.
 \begin{definition}
 Let $\Phi$ be a $2$CNF formula. $\alpha \subseteq \Phi$ is the {\em read once core} of $\Phi$ if:
 \begin{enumerate}
 \item $\alpha$ has a read-once refutation.
 \item For every $\beta \subseteq \Phi$, $\beta$ has a read-once refutation if and only if $\alpha \subseteq \beta$.
 \end{enumerate}
 \end{definition}
 
 \begin{definition}
 The {\em $2$CROR} problem is the problem of determining if a $2$CNF formula $\Phi$ has a read-once core.
 \end{definition}
 
 \begin{theorem}
 The {\em $2$CROR} problem is {\bf coNP-hard}.
 \end{theorem}
 
 \begin{proof}
 We establish this via inverse reduction from the ROR problem for $2$CNF.
 
 Let $\Phi$ be a $2CNF$ formula. We construct $\Phi'$ as follows:
 \begin{enumerate}
 \item Start with $\Phi' = \Phi$.
 \item Create the new variable $x_0$ and add the clauses $(x_0)$ and $(\neg x_0)$ to $\Phi'$.
 \end{enumerate}
 
 If $\Phi$ does not have a read-once refutation, then the clauses $(x_0)$ and $(\neg x_0)$ constitute a read-once core of $\Phi'$.
 However, if $\Phi$ does have a read-once refutation, then $\Phi'$ has two clause-disjoint read-once refutations. Thus, in this case, $\Phi'$ cannot have a read-once core.
 
 Thus, $\Phi'$ has a read-once core if and only if $\Phi$ does not have a read-once refutation. Since the ROR problem is {\bf NP-complete} for $2$CNF formulas, the result follows.
 \end{proof}
} 
 
\end{document}